\date{\today}
\definecolor{bleu_sombre}{rgb}{0,0,0.6} 
\definecolor{bs}{rgb}{0,0,0.6}  
\definecolor{rouge_sombre}{rgb}{0.8,0,0}
\definecolor{rs}{rgb}{0.8,0,0}
\definecolor{vert_sombre}{rgb}{0,0.6,0}
\definecolor{vs}{rgb}{0,0.6,0}
\newcommand{\un}{\mathds{1}}
\newcommand{\C}{\mathbb{C}} 
\newcommand{\Z}{\mathbb{Z}} 
\newcommand{\R}{\mathbb{R}} 
\newcommand{\tr}{\mathrm{tr}}
\newtheorem{proposition}{Proposition}[section]
\newtheorem{corollary}[proposition]{Corollary}
\newtheorem{theorem}[proposition]{Theorem}
\newtheorem{lemma}[proposition]{Lemma}
\newtheorem{hypothesis}{Hypothesis}
\newtheorem*{example*}{Examples}
\theoremstyle{remark}
\newtheorem{remark}{Remark}
\newcommand{\sps}[2]{\langle #1,#2 \rangle} 
\newcommand{\rd}{d}
\newcommand{\pauli}{\boldsymbol{\sigma}}
\newcommand{\bx}{\mathbf{x}}
\newcommand{\cb}{}
 \renewcommand{\ge}{\geqslant}
\newcommand{\oc}{\tilde{\Omega}} 
\newcommand{\og}{{\Omega'}} 
\newcommand{\dom}{\mathcal{D}}%
\newcommand{\normal}{\mathbf{n}}
\newcommand{\curv}{\kappa}
\newcommand{\hmv}{H_m^V}
\newcommand{\hov}{H_\Omega^V}
\newcommand{\rmv}{R_m^V}
\newcommand{\rov}{R_\Omega^V}
\title[Resolvent convergence to Dirac operators defined on
domains]{Resolvent convergence to Dirac operators on planar  domains}
\author{Jean-Marie Barbaroux}
\address{Jean-Marie Barbaroux\\
 Aix Marseille Univ, Universit\'e de Toulon\\ CNRS, CPT, Marseille, France.}
\email{barbarou@univ-tln.fr}
\author{Horia Cornean}
\address{Horia Cornean\\
Department of Mathematical Sciences, Aalborg University\\
Fredrik Bajers Vej 7G, 9220 Aalborg \O, Denmark.}
\email{cornean@math.aau.dk}
\author{Lo\"{i}c Le Treust}
\address{Lo\"{i}c Le Treust\\
Aix Marseille Univ, CNRS, Centrale Marseille, I2M\\
 Marseille, France.}
\email{loic.le-treust@univ-amu.fr}
\author{Edgardo Stockmeyer}
\address{ Edgardo Stockmeyer\\ Instituto de F\'\i sica\\
Pontificia Universidad Cat\'olica de Chile\\
Vicu\~na Mackenna 4860\\
 Santiago 7820436, Chile.}
\email{stock@fis.puc.cl}
\subjclass[2010]{Primary 81Q10; Secondary 46N50, 81Q37, 34L10, 47A10}
\keywords{Dirac operator, graphene}
\begin{document}
\begin{abstract}
  Consider a Dirac operator defined on the whole plane with a mass
  term of size $m$ supported outside a domain $\Omega$. We give a
  simple proof for the norm resolvent convergence, as $m$ goes to
  infinity, of this operator to a Dirac operator defined on $\Omega$
  with infinite mass boundary conditions.  The result is valid
  for bounded and unbounded domains and gives estimates on the speed
  of convergence. Moreover, the method easily extends when adding
 external matrix-valued potentials.
\end{abstract}

\maketitle
\section{Introduction}
On the plane $\R^2$ consider a smooth open set $\Omega$, not
necessarily bounded, with boundary $\partial \Omega$. For $m>0$, let
$H_m$ be the standard two-dimensional Dirac operator, defined
on the whole plane, with a mass term being zero on $\Omega$ and $m$ on
$\oc:=\R^2\setminus \Omega$. Berry and Mondragon \cite{Berry}
realised that in the limit when $m\to \infty$ the operator $H_m$
converges -- in some sense -- to an operator $H_\Omega$, defined only
on $\Omega$, with certain prescribed conditions at the boundary (which
were later known as infinite mass boundary conditions). Recently in
\cite{Stockmeyerwugalter2016} it was shown, assuming $\Omega$ to be
bounded, that the spectral projections of $H_m$ converge to those of
$H_\Omega$ as $m\to\infty$. The strategy used in
\cite{Stockmeyerwugalter2016} was based on the mini-max principle,
which is only appropriate for bounded domains. In this work we extend
the results from \cite{Stockmeyerwugalter2016} in several directions.

In recent years, Dirac operators have regain much attention since
they provide an effective description for the dynamics of charge
carriers in graphene \cite{castro2009electronic} and in other novel
materials that exhibit interesting physical  properties (see e.g.,
\cite{armitage2018weyl}).  In the physics literature,  this type of spatially localised mass terms
in the Dirac equation are used to model obstacles for the effective
particles \cite{AkhmerovBeenakker,science,weinstein18}. In addition,
the limiting operator
$H_\Omega$ is used to model particles
confined to $\Omega$, as it is the case for quantum dots
\cite{de2009electron} or for a periodically perforated sheet of
graphene \cite{brun2014electronic,BPP}. It is therefore desirable to
estimate the speed of convergence of $H_m$ to $H_\Omega$. 

In this article we allow $\Omega$ to be unbounded and show that, as
$m\to\infty$, the norm resolvent difference between $H_m$ and
$H_\Omega$ is of order $1/\sqrt{m}$, uniformly in the spectral
parameter (see Theorem \ref{main-thm}). Our proof is based on a novel resolvent
identity (see Lemma \ref{comparison}), which we believe is interesting in its own right.
The simplicity of the proof allows naturally the inclusion of
matrix-valued external potentials (see Theorem \ref{main-thm-2}).


\subsection{Setting and main result}
Let $\Omega\subset \R^2$ be an open set satisfying the following.
\begin{hypothesis}\label{hyp0}
$\ $

\begin{enumerate}
	\item Its boundary $\partial \Omega$ is $C^{2,1}$,
	\item the curvature $\kappa_\Omega$ of $\partial\Omega$ is in $L^\infty(\partial \Omega)$,
	\item there exist global tubular coordinates for a neighborhood of $\partial \Omega$.
Namely, there exist $\delta_0>0$ and $C_0>0$ such that the mapping 
\[\begin{split}
	\Psi : &]-\delta_0,\delta_0[\times \partial \Omega \to \mathbb{R}^2\\
	&(t,s)\mapsto s-tn(s)\,,
\end{split}\]
satisfies
\begin{enumerate}
	\item $\Psi$ is a $C^1$-diffeomorphism on its range,
	\item $\partial \Omega = \Psi(\{0\}\times \partial \Omega)$,
	\item $\Psi(]0,\delta_0[\times \partial \Omega) = \Omega\cap {\rm range}(\Psi)$,
	\item $C_0>|\det J_\Psi(s,t)|>C_0^{-1}$, for all $(t,s)\in ]-\delta_0,\delta_0[\times \partial \Omega$ .
\end{enumerate}
\end{enumerate}
\end{hypothesis}
%
%
%

This includes for instance the following cases:

\begin{itemize}
    \item[(i)] {\rm \bf Bounded boundary case}. The set $\Omega$ or $\tilde{\Omega}$ 
    is a bounded set, with $C^{2,1}$ boundary. 
	\item[(ii)] {\rm \bf Deformed half-space}. The case in which $\Omega$ is a connected unbounded set given by $\{ \bx=(x_1, x_2) \in \R^2\ :\ x_2\in\R \mbox{ and } -\infty < x_1 < f(x_2) \}$ for some $f\in C^{2,1}(\R)$. 
	\item[(iii)] {\rm \bf Periodic holes}.
Another typical situation in which we are interested in is the case where, for fixed $L>0$, $\oc =  \bigcup_{\gamma\in L \Z^2} B_r(\gamma)$ consists of periodically distributed balls of radius $r< L$. 
\end{itemize}

We work on the spaces of measurable square integrable functions
$L^2(\Omega,\C^2)$, $L^2(\oc,\C^2),$ $ L^2(\partial\Omega,\C^2)$, and
$L^2(\R^2,\C^2)$, with scalar products
$\sps{\cdot}{\cdot}_{\Omega}, \sps{\cdot}{\cdot}_{\oc},
\sps{\cdot}{\cdot}_{\partial\Omega} $,
and we set $\sps{\cdot}{\cdot}\equiv \sps{\cdot}{\cdot}_{\R^2}$. We
indicate the set of integration in the same way for the corresponding
norms $\|\cdot\|^2=\sps{\cdot}{\cdot}$. The Euclidian scalar product
on $\C^2$ is denoted by $(\cdot,\cdot)$. Thus, we have, for instance,
for $f,g\in L^2(\Omega,\C^2)$
\begin{align*}
  \sps{f}{g}_{\Omega}=\int_\Omega (f(\bx),g(\bx)) \rd\bx\,.
\end{align*}
Throughout this paper we use the
identification
\begin{equation}\label{splitting}
\begin{split}
L^2(\R^2,\C^2) &\cong L^2(\Omega,\C^2)\oplus L^2(\oc,\C^2)\\
\varphi &\mapsto \varphi\un_\Omega\oplus \varphi\un_{\oc}\,,
\end{split}
\end{equation}
where $\un$ denotes the indicator function on the corresponding
set.

For $\og\in\{\Omega, \oc\}$ we denote by $H^1(\og,\C^2)$ the Sobolev
space of functions in $L^2(\og,\C^2)$ whose first-order distributional
derivatives belong to $L^2(\og,\C^2)$. If it is clear from the
context we may drop the reference to the spinor space $\C^2$ and
simply write $L^2(\og),$ $H^1(\og)$, etc. Due to the regularity of
the boundary, there exist a
continuous linear operator, the trace operator,
\begin{align*}
  \tr_{\og}: H^1(\og,\C^2)\to L^2(\partial\Omega,\C^2),
\end{align*}
such that $\tr_{\og}\varphi=\varphi$ on $\partial\Omega$ for all 
$\varphi\in H^1(\og,\C^2)$ continuous on 
$\overline{\og}$ (see \cite[Theorem 15.23]{Leoni} for a general 
statement valid in the case of unbounded boundaries).

Moreover, for any $\varepsilon>0$ there exists a
constant $C_\varepsilon$ such that for all $\varphi\in H^1(\og,\C^2)$
\begin{align}\label{trace-ineq}
  \|\tr_{\og} \varphi\|^2_{\partial \Omega}\leqslant \varepsilon \|\nabla
  \varphi\|_{\og}^2+
  C_\varepsilon  \|\varphi\|_{\og}^2.
\end{align}
\begin{remark}\label{rem:trace-inequality}
The above trace inequality is usually stated with $\varepsilon=1$. To obtain the desired result  \eqref{trace-ineq}, following the standard proof, it suffices to apply $2ab\leqslant \epsilon a^2 + {b^2}/{\epsilon}$ (e.g. in \cite[proof of theorem~1, \S~5.5, Eq.(1),  p272]{Evans}). 
\end{remark}

We define the differential expression associated to  the standard  massless   Dirac 
operator on the plane as
\begin{align*}
  T:=\frac{1}{i} {\boldsymbol \sigma}\cdot \nabla=\left ( \begin{array}{ll}
                                                            \qquad 0 & -i\partial_1-\partial_2\\
                                                            -i\partial_1+\partial_2
                                                                     &
                                                                       \qquad  0 
\end{array} \right ),
        \end{align*}
where $ {\boldsymbol \sigma}=(\sigma_1,\sigma_2)$ and the Pauli
matrices  are given by
\begin{align*}
  \sigma_1
  =\left(
\begin{array}{cc}
 0&1\\
 1&0
\end{array}
 \right),\quad
\sigma_2=\left(
    \begin{array}{cc}
0&-i\\
i&0
\end{array}
   \right),\quad 
   \sigma_3
   =\left(
\begin{array}{cc}
 1&0\\
 0&-1
\end{array}
 \right).
\end{align*}
They satisfy the
anticommutation relation 
\begin{equation}
  \label{eq:6}
  \{\sigma_j,\sigma_k\}:=\sigma_j\sigma_k+
\sigma_k\sigma_j=2\delta_{jk},\qquad j,k\in\{1,2,3\}.
\end{equation} 
\color{blue}
\color{black}
Let us now introduce the main operators of this work. For $m>0$ we
denote by $H_m$ the self-adjoint operator in $L^2(\R^2,\C^2)$ acting
as 
\begin{align*}
  H_m=T   + m\un_{\oc} \sigma_3
\end{align*}
with operator domain $\dom(H_m)=H^1(\R^2,\C^2)$. For an element of its
resolvent set $z\in\varrho(H_m)$ we write the resolvent of $H_m$ as
\begin{align}
  \label{eq:1}
  R_m(z):=(H_m-z)^{-1}.
\end{align}
We are interested in the limit of $R_m$ as $m\to\infty$; the limit
operator acts non-trivially in $L^2(\Omega,\C^2)$. In order to describe its boundary
conditions we set, for $\og\in\{\Omega,\oc\}$,
$\normal_{\og}:\partial\Omega \to \boldsymbol{S}^1$ to be the normal
vector to ${\og}$ pointing outwards. For any $\bx\in \partial\Omega$
we define the {\it boundary matrix} as
\begin{align*}
  B_{\og}(\bx):=-i\sigma_3 \pauli\cdot\normal_{\og}(\bx),
\end{align*}
which is self-adjoint with eigenvalues $\pm 1$.  For the case
$\og=\Omega$ we write the corresponding eigenprojections as
$ P_{\pm}:=(1\pm B_{\Omega})/2$.  Due to the anticommutator relations
\eqref{eq:6} the following intertwining relation holds
\begin{equation}
  \label{eq:5}
  \sigma_3P_-=P_+\sigma_3.
\end{equation}
We define the self-adjoint operator (see e.g. \cite{Benguria2017})
\begin{equation}
\label{eq:3}
H_\Omega\varphi= T  \varphi, \qquad \varphi\in\dom_{\Omega}:=\{\varphi\in H^1(\Omega,\C^2)\,:\,
P_-
\tr_{\Omega} \varphi=0\}.
\end{equation}
In the physics literature conditions as $P_-
\tr_{\Omega} \varphi=0$ or $P_+
\tr_{\Omega} \varphi=0$ are referred to as infinite-mass boundary 
conditions \cite{AkhmerovBeenakker}. For further recent study of $H_\Omega$ from the mathematical point of view see e.g. \cite{BFSV,treust2017spectral,Borrelli}.

Note that in dimension $3$, this infinite mass limit is strongly related to the MIT bag model for quarks confinement. This model has been derived first by Bogolioubov \cite{bogoliubov1987} in 1968 and extended by physicists from MIT into a shape optimization problem  \cite{MIT101974} in 1974. We also refer to \cite{arrizabalaga:hal-01863065,arrizabalaga:hal-01540149,arrizabalaga:hal-01343717} for a study of these problems from the mathematical viewpoint and other physical references on the subject. 


For $z\in\varrho(H_\Omega)$ we write 
\begin{align}
  \label{eq:1.1}
  R_\Omega(z):=(H_\Omega-z)^{-1}.
\end{align}
\begin{theorem}[Resolvent convergence - free case]\label{main-thm}
Let $K\subset \varrho(H_\Omega)$ be a compact set. Then, there exists
$m_0>1$ such that for all $m>m_0$: $K\subset\varrho(H_m)$ and there is a
constant $C_{K}\equiv C<\infty$ such that
\begin{equation}
  \label{eq:7}
  \sup_{z\in K}\|R_m(z)-R_\Omega(z)\oplus 0\|\leqslant \frac{C}{\sqrt{m}}\,.
\end{equation}
Here the orthogonal sum is in the sense of the splitting
\eqref{splitting}.
\end{theorem}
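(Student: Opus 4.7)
My plan is to build an explicit quasi-inverse $A_m(z)$ of $H_m - z$ whose action on $L^2(\R^2,\C^2)$ matches $R_\Omega(z)\oplus 0$ up to an operator-norm error of size $m^{-1/2}$, and then to close the argument by Neumann series. Given $f = f_\Omega \oplus f_{\oc}$ and $z \in K \subset \varrho(H_\Omega)$, set $\varphi := R_\Omega(z) f_\Omega \in \dom_\Omega$, so that $\tr_\Omega \varphi = P_+ \tr_\Omega \varphi =: \varphi_b$. Using the tubular coordinates $(t,s) = \Psi^{-1}(\bx)$ of Hypothesis \ref{hyp0} (with $t<0$ in $\oc$) and a cutoff $\chi \in C_c^\infty((-\delta_0,\delta_0);[0,1])$ with $\chi \equiv 1$ in a neighborhood of $0$, define the boundary-layer extension
\[
  \tilde\varphi_m(\bx) := \chi(t(\bx))\,e^{m\,t(\bx)}\,\varphi_b(s(\bx)) \quad \text{on } \oc,
\]
extended by zero outside the tubular neighborhood. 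Since $\chi(0)\varphi_b$ matches $\tr_\Omega \varphi$, the function $\psi_m := \varphi \oplus \tilde\varphi_m$ lies in $H^1(\R^2) = \dom(H_m)$.

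The core of the argument is the computation of $(H_m-z)\psi_m$ on $\oc$. Expanding $T$ in the tubular frame yields, up to a curvature factor $(1-t\curv_\Omega)^{-1}$, $T = -i(\pauli\cdot\nabla t)\partial_t + (\text{tangential part})$ with $\nabla t = -\normal_\Omega$. Using $\pauli\cdot\normal_\Omega = i\sigma_3 B_\Omega$ together with $B_\Omega\varphi_b = \varphi_b$ (since $\varphi_b = P_+\varphi_b$ lies in the $+1$ eigenspace of $B_\Omega$), the leading normal contribution to $T\tilde\varphi_m$ evaluates to $-m\sigma_3\tilde\varphi_m$, cancelling the mass term $+m\sigma_3\tilde\varphi_m$ exactly---this is precisely why $P_-\tr_\Omega\varphi = 0$ is the appropriate infinite-mass boundary condition. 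One is left with
\[
  (H_m - z)\psi_m = f_\Omega \oplus r_m,
\]
where $r_m$ gathers a tangential derivative of $\varphi_b$, a curvature correction, the term $-z\tilde\varphi_m$, and a cutoff commutator $[T,\chi]\,e^{mt}\varphi_b$ supported where $|t|\ge \delta_0/3$. Using $\int_{-\delta_0}^0 e^{2mt}\rd t = O(1/m)$, the trace inequality \eqref{trace-ineq}, and the $e^{-m\delta_0/3}$-smallness of the commutator, one obtains $\|\tilde\varphi_m\|_{\oc} + \|r_m\|_{\oc} \le C\, m^{-1/2}(\|\varphi_b\|_{\partial\Omega} + \|\partial_s\varphi_b\|_{\partial\Omega})$. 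An elliptic bound $\|\varphi\|_{H^1(\Omega)} \le C_K\|f_\Omega\|_\Omega$ uniform on $K$---obtained by integrating $\|T\varphi\|_\Omega^2$ by parts against $P_-\tr_\Omega\varphi = 0$ and absorbing the curvature boundary term via \eqref{trace-ineq}---then converts this into $\|\tilde\varphi_m\|_{\oc},\|r_m\|_{\oc} \le C_K\, m^{-1/2}\|f_\Omega\|_\Omega$.

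For the $f_{\oc}$-sector the large mass dominates inside $\oc$, so $(m\sigma_3 - z)^{-1}$ has norm $O(1/m)$; a parallel construction with a cutoff keeping the corrector away from $\partial\Omega$ produces $w_m$ supported in $\oc$ with $(H_m - z)(0\oplus w_m) = (0\oplus f_{\oc}) + \rho_m$ and $\|w_m\|, \|\rho_m\| \le C_K\, m^{-1/2}\|f_{\oc}\|_{\oc}$. Setting $A_m(z) f := \psi_m + (0\oplus w_m)$ gives
\[
  (H_m - z)A_m(z) = \id + S_m(z), \qquad \|S_m(z)\| \le C_K\, m^{-1/2},
\]
and for $m$ large enough $\id + S_m(z)$ is Neumann-invertible, whence $K \subset \varrho(H_m)$ and $R_m(z) = A_m(z)(\id+S_m(z))^{-1}$. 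Since $A_m(z) - (R_\Omega(z)\oplus 0) = (0\oplus\tilde\varphi_m) + (0\oplus w_m) = O(m^{-1/2})$ and $(\id + S_m)^{-1} - \id = O(m^{-1/2})$, the estimate \eqref{eq:7} follows. The main obstacle is the tubular-coordinate expansion in the second paragraph: tracking the curvature factor $(1-t\curv_\Omega)^{-1}$ and its derivatives, and pairing the tangential trace $\partial_s\varphi_b$ weakly so as to avoid invoking $H^{3/2}$-regularity of $\varphi$. The novel resolvent identity of Lemma \ref{comparison} presumably packages this boundary-layer mechanism in a closed algebraic form that sidesteps the most painful bookkeeping.
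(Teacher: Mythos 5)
Your approach is genuinely different from the paper's. The paper derives an abstract resolvent identity (Lemma~\ref{comparison}) that expresses $\sps{f}{(R_m - R_\Omega\oplus R_{\oc}^{(m)})g}$ purely in terms of $L^2(\partial\Omega)$-pairings of the traces $\tr_\Omega R_m^*f$, $\tr_\Omega R_\Omega h$, $\tr_{\oc}R_{\oc}^{(m)}\tilde h$, and then proves boundary trace estimates (Lemmas~\ref{r-lemma1}--\ref{r-lemma2}, Corollary~\ref{r-lemma3}). You instead build a boundary-layer quasimode and close by Neumann series. Your central cancellation --- the normal part of $T$ applied to $e^{mt}\varphi_b$ produces $-m\sigma_3 B_\Omega e^{mt}\varphi_b = -m\sigma_3 e^{mt}\varphi_b$ thanks to $B_\Omega\varphi_b=\varphi_b$, which annihilates the mass term --- is correct and is indeed the analytic heart of why $P_-\tr_\Omega\varphi=0$ is the right boundary condition.

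However, the proposal has a genuine gap that you flag but do not resolve. The estimate $\|r_m\|_{\oc}\lesssim m^{-1/2}(\|\varphi_b\|_{\partial\Omega}+\|\partial_s\varphi_b\|_{\partial\Omega})$ requires $\partial_s\varphi_b\in L^2(\partial\Omega)$, i.e.\ $\varphi_b\in H^1(\partial\Omega)$. Worse, even for $\psi_m=\varphi\oplus\tilde\varphi_m$ to lie in $\dom(H_m)=H^1(\R^2,\C^2)$ one needs $\tilde\varphi_m\in H^1(\oc)$, and since $\tilde\varphi_m(t,s)=\chi(t)e^{mt}\varphi_b(s)$ has tangential derivative $\chi(t)e^{mt}\partial_s\varphi_b(s)$, this again forces $\varphi_b\in H^1(\partial\Omega)$ (the profile is independent of $t$ on each slice, so unlike a harmonic extension you cannot get away with $H^{1/2}$). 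But for $f_\Omega\in L^2(\Omega)$ one only has $\varphi=R_\Omega(z)f_\Omega\in H^1(\Omega)$, whose trace is merely $H^{1/2}(\partial\Omega)$; the elliptic bound $\|\varphi\|_{H^1(\Omega)}\lesssim\|f_\Omega\|_\Omega$ does not control $\|\partial_s\varphi_b\|_{\partial\Omega}$, and no $H^{3/2}(\Omega)$ bound on $R_\Omega(z)$ is available without more regularity on $f_\Omega$. The suggested ``weak pairing'' repair does not obviously work either: the Neumann series step requires an $L^2\to L^2$ operator-norm bound on $S_m(z)$, and even reformulating via $\sps{R_m^*g}{0\oplus r_m}$ and integrating by parts tangentially produces $\|e^{mt}\partial_s R_m^*g\|_{\oc}$, which from the available a~priori bounds is only $O(\|g\|)$ and not $O(m^{-1/2}\|g\|)$. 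This is precisely the difficulty that the paper's resolvent identity sidesteps: every term in Lemma~\ref{comparison} is an $L^2(\partial\Omega)$-pairing of \emph{undifferentiated} traces, so only the trace inequality \eqref{trace-ineq} (valid for $H^1$ functions) is ever needed, never $H^1(\partial\Omega)$-regularity of boundary values.
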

%
%
%
\begin{remark}\label{rmk2}
  Notice that Theorem \ref{main-thm} implies that the spectrum of
  $H_m$ converges to that of $H_\Omega$ (see e.g., \cite[Proposition
  10.2.4]{Oliveira:2008kq}). 
  Consequently, for
  $a,b\in \varrho(H_\Omega)$ with $a<b$, we have
 $$
 \||\un_{(a,b)}(H_m)-\un_{(a,b)}(H_\Omega)\oplus 0\|=\mathcal{O}(m^{-1/2}),\qquad m\gg
  1,
 $$
where $\un_{(a,b)}$ is the indicator function on $(a,b)$ and the
spectral projections are defined through the functional
calculus. This result is obtained by applying the estimate \eqref{eq:7} to the Riesz'
  Formula $\un_{(a,b)}(H_m)-\un_{(a,b)}(H_\Omega)\oplus 0 
  = \frac{1}{2 i \pi}\int_\gamma R_m(z)-R_\Omega(z)\oplus 0\, d z$, where $\gamma$ is a 
  contour  around $(a,b)$ with distance $\nu>0$ to $(a,b)$.

In particular,  if $\lambda$ is an isolated eigenvalue of $H_\Omega$, there exists a 
sequence $(\lambda_m)$ such that for all $m$ large enough $\lambda_m$ 
is an eigenvalue 
of $H_m$ and $\lambda - \lambda_m = \mathcal{O}(m^{-1/2}),\ m\gg 1$. 
\end{remark}
\begin{remark}\label{rmk3}
  It is easy to check that $H_{-m}$ convergences, as $m\to \infty$, to
  an operator defined as in \eqref{eq:3} but with the boundary
  condition replaced by $P_+ \tr_{\Omega} \varphi=0$ (see \cite[Remark
  2]{Stockmeyerwugalter2016}).
\end{remark}
In \cite{Stockmeyerwugalter2016} it was shown that the spectral
projections of $H_m$ convergence to those of $H_\Omega$ in operator
norm.  In their proof the authors used a combination of the mini-max
principle (which is not suitable for unbounded domains) with certain
apriori lower bound on the quadratic form
$\varphi\mapsto \|H_m\varphi\|^2$. In this work we use the same lower bound
(see \cite[Lemma~4]{Stockmeyerwugalter2016} and
Proposition~\ref{prop:sa} below) together with certain novel resolvent
identity stated in Lemma~\ref{comparison} below. The main point here is that we compare, through their
resolvents, the operator $H_m$ with $H_{\Omega}\oplus H_{\oc}^{(m)}$,
where $H_{\oc}^{(m)}=T+m\sigma_3$ is an auxiliar operator having
domain  $H^1(\oc)$ with infinite-mass
boundary conditions (see \eqref{eq:4.1} below).
We conclude the proof
by estimating the corresponding resolvents and resolvent traces.
The proof of Theorem~\ref{main-thm} is given in Section~\ref{pr}.
%
%
%
%

\medskip
The next theorem is an extension of the main theorem to the case of 
Dirac operators with matrix-valued potentials. 
\begin{hypothesis}\label{hyp1}
Let $V$ be a symmetric matrix-valued potential. We assume: 

(i) There exists $a\in [0,1)$ and $b\geqslant 0$ such that for all $\varphi\in H^1(\Omega,\C^2)$
$$
 \| V\varphi \|^2_\Omega \leqslant a \|\nabla \varphi\|_\Omega^2 + b \|\varphi\|_\Omega^2 \, .
$$

(ii) The potential $V$ is bounded in 
$\oc$: $\|V\|_{L^\infty(\oc)} <\infty$.
\end{hypothesis}

We thus consider the following operators
$$
 \hmv = H_m + V 
$$
and for $z\in\rho(\hmv)$ we denote
$$
 \rmv(z) =  (\hmv - z)^{-1}\, .
$$ 
We also define
\begin{equation}\nonumber
\hov = H_\Omega +V \quad\mbox{on}\quad \{\varphi\in H^1(\Omega,\C^2)\,:\,
P_-
\tr_{\Omega} \varphi=0\}
\end{equation}
and $\rov(z) =  (\hov - z)^{-1}$ for $z\in\rho(\hov)$. 
%
%
%
\begin{proposition}\label{prop:sa}
Assume Hypothesis~\ref{hyp1} holds. Then $\hmv$ is self-adjoint on $\dom(\hmv) = H^1(\R^2, \C^2)$ and $\hov$ is self-adjoint on $\dom(\hov) = \{\varphi\in H^1(\Omega,\C^2)\,:\,P_- \tr_{\Omega} \varphi=0\}$.
\end{proposition}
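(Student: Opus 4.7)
The plan is to view $V$ as a symmetric perturbation of the unperturbed operators and invoke the Kato--Rellich theorem. Self-adjointness of $H_m$ on $H^1(\R^2,\C^2)$ is standard since $m\un_{\oc}\sigma_3$ is bounded and symmetric, and the self-adjointness of $H_\Omega$ on $\dom_{\Omega}$ is the content of \cite{Benguria2017}. It therefore suffices to prove that $V$ is operator-bounded with respect to the corresponding free operator with relative bound strictly less than one.

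\textbf{Part 1} ($\hmv$). Splitting the $L^2$-norm over the decomposition \eqref{splitting} and applying Hypothesis~\ref{hyp1} on each piece yields
\begin{equation*}
\|V\varphi\|^2 \leqslant a\|\nabla\varphi\|_\Omega^2 + b\|\varphi\|_\Omega^2 + \|V\|_{L^\infty(\oc)}^2\|\varphi\|_{\oc}^2.
\end{equation*}
The Pauli anticommutation relations \eqref{eq:6} imply that $\|T\psi\| = \|\nabla\psi\|$ for $\psi\in H^1(\R^2,\C^2)$, whence $\|\nabla\varphi\|_\Omega\leqslant \|T\varphi\|$. Combined with $\|T\varphi\|\leqslant \|H_m\varphi\|+m\|\varphi\|$ and the elementary inequality $(x+y)^2\leqslant (1+\epsilon)x^2+(1+\epsilon^{-1})y^2$, this gives
\begin{equation*}
\|V\varphi\|^2 \leqslant a(1+\epsilon)\|H_m\varphi\|^2 + C_{\epsilon,m}\|\varphi\|^2 .
\end{equation*}
Choosing $\epsilon$ small enough that $a(1+\epsilon)<1$ produces the required relative bound, and Kato--Rellich concludes.

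\textbf{Part 2} ($\hov$). From Hypothesis~\ref{hyp1} one has $\|V\varphi\|_\Omega^2\leqslant a\|\nabla\varphi\|_\Omega^2+b\|\varphi\|_\Omega^2$, so the task is to control $\|\nabla\varphi\|_\Omega$ by the graph norm of $H_\Omega$ with a constant close to one. The crucial ingredient is an identity of the form
\begin{equation*}
\|T\varphi\|_\Omega^2 = \|\nabla\varphi\|_\Omega^2 + \int_{\partial\Omega}\kappa_\Omega |\tr_{\Omega}\varphi|^2 \,\d s,
\end{equation*}
valid for $\varphi\in\dom_{\Omega}$. It is obtained by expanding $|T\varphi|^2$ pointwise using the anticommutation relations \eqref{eq:6}, applying Stokes' theorem to the resulting exact two-form, and rewriting the boundary integral via the constraint $P_-\tr_{\Omega}\varphi=0$, which produces precisely the curvature factor. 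Since $\kappa_\Omega\in L^\infty(\partial\Omega)$ by Hypothesis~\ref{hyp0}, the trace inequality \eqref{trace-ineq} applied with a small parameter then absorbs the boundary integral into $\|\nabla\varphi\|_\Omega^2$, yielding
\begin{equation*}
\|\nabla\varphi\|_\Omega^2 \leqslant (1+\delta)\|T\varphi\|_\Omega^2 + C_\delta\|\varphi\|_\Omega^2
\end{equation*}
for arbitrary $\delta>0$. Choosing $\delta$ so that $a(1+\delta)<1$ and applying Kato--Rellich finishes the proof.

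The most delicate point is establishing the identity for a general $\varphi\in\dom_{\Omega}$ with only $H^1$ regularity available a priori: the integration by parts is first performed on smooth elements of $\overline{\Omega}$ satisfying $P_-\tr_{\Omega}\varphi=0$ and then extended by density, which is what ultimately requires the $C^{2,1}$-regularity of $\partial\Omega$ and the global tubular neighborhood hypothesis on $\partial\Omega$ from Hypothesis~\ref{hyp0}.
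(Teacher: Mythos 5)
Your proof is correct, and Part~2 follows essentially the same route as the paper: the energy identity of Lemma~\ref{energysquared} (note the paper's version carries a prefactor $\tfrac12$ on the curvature term, which you dropped; harmless, since only $\kappa_\Omega\in L^\infty$ matters) combined with the trace inequality \eqref{trace-ineq} to absorb the boundary term, then Kato--Rellich.

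Part~1 is where you genuinely diverge. The paper proves relative boundedness of $V$ with respect to $H_m$ via Lemma~\ref{lem:V1}, whose key input is the a priori lower bound of Proposition~\ref{sw16} (imported from \cite{Stockmeyerwugalter2016}), giving $\|H_m\varphi\|^2\ge(1-\epsilon)\|\nabla\varphi\|^2_\Omega - c_\epsilon\|\varphi\|^2_\Omega$ with constants \emph{uniform in $m$} once $m>m_0'$. You instead exploit the elementary global identity $\|T\psi\|=\|\nabla\psi\|$ on $H^1(\R^2,\C^2)$ together with the crude estimate $\|T\varphi\|\le\|H_m\varphi\|+m\|\varphi\|$. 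Your route is more elementary (no need for the Stockmeyer--Vugalter lemma nor the tubular-coordinate machinery behind it) and in fact establishes self-adjointness of $\hmv$ for \emph{every} $m>0$, not just $m$ large. The trade-off is that your constant $C_{\epsilon,m}$ scales like $m^2$, so it does not yield the $m$-uniform relative bound \eqref{hyp2} that the paper reuses in the proof of Theorem~\ref{main-thm-2} to get $\|VR_m(\xi)\|<1$ uniformly in $m$. For the present Proposition~\ref{prop:sa} alone, both arguments are valid, and yours is arguably the cleaner one.
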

%
%
%
\begin{theorem}[Resolvent convergence - potential case]\label{main-thm-2}
Assume Hypothesis~\ref{hyp1} holds. Let $K\subset \varrho(H_\Omega)$ be a compact set. Then, there exists
$m_0>1$ such that for all $m>m_0$: $K\subset\varrho(H_m)$ and there is a
constant $C_{K}^V\equiv C<\infty$ such that
\begin{equation}
  \sup_{z\in K}\|\rmv(z)-\rov(z)\oplus 0\|\leqslant \frac{C}{\sqrt{m}}\,.
\end{equation}
\end{theorem}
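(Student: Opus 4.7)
The plan is to reduce Theorem \ref{main-thm-2} to Theorem \ref{main-thm} by treating $V$ as a perturbation via the second-resolvent identity, and to combine this with the apriori estimate underlying Proposition \ref{prop:sa}.

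Set $W_m(z):=R_m(z)-R_\Omega(z)\oplus 0$ and $U_m(z):=\rmv(z)-\rov(z)\oplus 0$. The identities
\begin{align*}
\rmv(z) = R_m(z) - R_m(z)V\rmv(z), \qquad \rov(z) = R_\Omega(z) - R_\Omega(z)V\rov(z),
\end{align*}
together with $R_m(z) = (R_\Omega(z)\oplus 0) + W_m(z)$, $\rmv(z) = (\rov(z)\oplus 0)+U_m(z)$, and the observation that under the splitting \eqref{splitting} the operator $\rov(z)\oplus 0$ has trivial $\oc$-component (so that $(R_\Omega(z)\oplus 0)V(\rov(z)\oplus 0)$ equals $(R_\Omega V\rov)\oplus 0$), yield after a short manipulation the key resolvent identity
\begin{align*}
\bigl[\,I + (R_\Omega(z)\oplus 0)V\,\bigr]\,U_m(z) = W_m(z)\bigl[\,I - V\rmv(z)\,\bigr].
\end{align*}
Theorem \ref{main-thm} gives $\|W_m(z)\|=\mathcal{O}(m^{-1/2})$ uniformly on $K$, so it remains to bound the two other factors uniformly in $m$ and $z\in K$ (which I read as $K\subset\varrho(H_\Omega)\cap\varrho(\hov)$, since the right-hand side of the theorem requires $\rov(z)$ to be defined).

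Under \eqref{splitting} the operator $I + (R_\Omega(z)\oplus 0)V$ is block diagonal, with $\oc$-block equal to the identity and $\Omega$-block equal to $I + R_\Omega(z)V = (\hov-z)R_\Omega(z)$; this is invertible for $z\in\varrho(\hov)$ with inverse uniformly bounded on compact subsets. The delicate factor, and the main obstacle, is to show that $\|V\rmv(z)\|$ stays bounded uniformly as $m\to\infty$. For this I would apply the apriori bound
\begin{align*}
\|\nabla\varphi\|^2 + m^2\|\varphi\|_{\oc}^2 \leq C\bigl(\|H_m\varphi\|^2+\|\varphi\|^2\bigr), \qquad \varphi\in H^1(\R^2,\C^2),
\end{align*}
from \cite[Lemma~4]{Stockmeyerwugalter2016} — which is also the engine of Proposition \ref{prop:sa} — to $\varphi=\rmv(z)f$. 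Combining this with $\|H_m\varphi\|\leq \|f\|+|z|\|\varphi\|+\|V\varphi\|$ and Hypothesis \ref{hyp1}, which gives $\|V\varphi\|^2\leq a\|\nabla\varphi\|^2 + b\|\varphi\|^2 + \|V\|_{L^\infty(\oc)}^2\|\varphi\|_{\oc}^2$, a short bootstrap exploiting $a<1$ absorbs the $\|V\varphi\|$ term on the right and yields $\|\rmv(z)\|=\mathcal{O}(1)$ and $\|V\rmv(z)\|=\mathcal{O}(1)$, uniformly in $m$ and $z\in K$.

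Inserting the three bounds into the displayed identity gives $\|U_m(z)\|\leq C/\sqrt{m}$, which is the claimed estimate. The inclusion $K\subset\varrho(\hmv)$ for $m$ large falls out as a byproduct, since the same apriori control permits a perturbative construction of $(\hmv-z)^{-1}$ from $R_m(z)$.
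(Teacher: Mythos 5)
Your key identity
\[
\bigl[\,I + (R_\Omega(z)\oplus 0)V\,\bigr]\,U_m(z) = W_m(z)\bigl[\,I - V\rmv(z)\,\bigr]
\]
is exactly the paper's \eqref{eq:lem4.2-2} (the paper's printed version carries a sign typo in each factor; yours is the correct one), so the overall strategy is the same: treat $V$ perturbatively and reduce to Theorem~\ref{main-thm}. Where you differ is in \emph{where} the auxiliary factors are bounded. The paper first works on the set $S_{\mu_0,\rho}$ with $\mu_0$ large, where $\|R_\Omega(\xi) V\|<1$ and $\|V R_m(\xi)\|<1$ follow from the relative bounds (via \cite[Satz~9]{Weidmann1976}) together with $\|R(\xi)\|\leqslant 1/\mu_0$; the factors are then bounded by Neumann series, and the passage to the compact set $K$ is done afterwards via the $\hov$-analogue of Lemma~\ref{nc}. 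You instead try to bound the factors directly on $K$. That forces you to assume $K\subset\varrho(H_\Omega)\cap\varrho(\hov)$ — a genuinely stronger hypothesis, since all the theorem needs is $K\subset\varrho(\hov)$ (the reference to $\varrho(H_\Omega)$ in the statement is a slip); the paper's detour through $S_{\mu_0,\rho}$ is precisely what avoids ever having to evaluate $R_\Omega(z)$ at points near $\sigma(H_\Omega)$.

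There is also a problem with the a priori estimate you quote. The inequality $\|\nabla\varphi\|^2+m^2\|\varphi\|_{\oc}^2\leqslant C(\|H_m\varphi\|^2+\|\varphi\|^2)$ with an $m$-independent $C$ is not what is stated in this paper nor in \cite[Lemma~4]{Stockmeyerwugalter2016}: one has $\|H_m\varphi\|^2=\|\nabla\varphi\|^2+m^2\|\varphi\|_{\oc}^2+m(\|P_-\tr\varphi\|^2-\|P_+\tr\varphi\|^2)$, and the term $-m\|P_+\tr\varphi\|^2$ cannot be absorbed into $\|\varphi\|^2$ uniformly in $m$. What one actually needs — and what the paper uses — is Proposition~\ref{sw16}, which controls only $\|\nabla\varphi\|^2_\Omega$ (together with a positive $m\|P_-\tr\varphi\|^2$ term), plus the trace inequality \eqref{trace-ineq}. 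This is exactly the content of Lemma~\ref{lem:V1}: $\|V\varphi\|^2\leqslant\tilde a\|H_m\varphi\|^2+\tilde b\|\varphi\|^2$ with $\tilde a<1$, from which $\|V R_m(\xi)\|<1$ on $S_{\mu_0,\rho}$ follows; your bootstrap idea for controlling $\|V\rmv\|$ is then implemented cleanly by writing $V\rmv = V R_m(1+VR_m)^{-1}$. One more small slip: $I+R_\Omega(z)V$ is not $(\hov-z)R_\Omega(z)$; that product equals $I+VR_\Omega(z)$. Rather $I+R_\Omega(z)V=R_\Omega(z)(\hov-z)$ on the domain, with bounded inverse $1-\rov(z)V$; the conclusion you draw (invertibility with locally uniform bound for $z\in\varrho(\hov)$) is still correct. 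In sum: same approach and same central identity, but the paper's route through $S_{\mu_0,\rho}$ plus Lemma~\ref{nc} is both cleaner and needed to prove the theorem under the intended hypothesis on $K$, and the specific a priori bound you cite should be replaced by Proposition~\ref{sw16}/Lemma~\ref{lem:V1}.
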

The proofs of the above two results are postponed to Section~\ref{S4}.
%
%
%
\section{Proof of Theorem~\ref{main-thm}}\label{pr}
For $m>0$ we define the auxiliar self-adjoint operator acting in
$L^2(\oc,\C^2)$ as
\begin{equation}
\label{eq:4.1}
H_{\oc}^{(m)}\varphi= T\varphi+m\sigma_3\varphi, \qquad\varphi\in\dom_{\oc}:=\{\varphi\in H^1(\oc,\C^2)\,:\,
P_+\tr_{\oc} \varphi= 0\}.
\end{equation}
For $z\in\varrho(H_{\oc}^{(m)})$ we write its resolvent  
\begin{align}
  \label{eq:1.2}
  R_{\oc}^{(m)}(z):=(H_{\oc}^{(m)}-z)^{-1}.
\end{align}
\begin{remark}\label{B-bc}
  Notice that, since $ B_{\Omega}=-B_{\oc}$, the boundary conditions
  stated in \eqref{eq:3} and \eqref{eq:4.1} can  both be written in
  terms of the corresponding boundary matrix. More precisely, for 
$\og\in\{\Omega,\oc\}$, we have  
  \begin{align*}
    \varphi\in \mathcal{D}_{\og}\,\,\,\Leftrightarrow \,\,\, \varphi\in
    H^1(\og,\C^2)
\quad \mbox{and}\quad B_{\og}\tr_{\og} \varphi= \tr_{\og} \varphi\,.
  \end{align*}
\end{remark}
\begin{lemma}[Resolvent identity]
 \label{comparison}
 Consider the resolvents \eqref{eq:1},
 \eqref{eq:1.1}, and \eqref{eq:1.2}, evaluated at $z\in \varrho(H_m)\cap \varrho(H_{\Omega}) \cap
 \varrho(H_{\oc}^{(m)})$.
 Then, for any $f,g \in L^2(\R^2,\C^2)$ with $g=h\oplus \tilde{h}\in
 L^2(\Omega,\C^2)\oplus L^2(\oc,\C^2) $, we have
\begin{align*}
  \sps{f}{R_m g}=\sps{f}{R_{\Omega}\oplus R_{\oc}^{(m)} g}+
\sps{P_-\tr_{\Omega} R_{m}^* f}{\sigma_3 \tr_{\Omega}
  R_{\Omega} h}_{\partial\Omega}
+
\sps{P_+\tr_{\oc} R_{m}^* f}{\sigma_3 \tr_{\oc} R_{\oc}^{(m)} \tilde{h}}_{\partial\Omega}\,.
\end{align*}
\end{lemma}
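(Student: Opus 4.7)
The plan is to evaluate $\sps{f}{R_m g}$ through the adjoint identity $\sps{f}{R_m g} = \sps{u}{g}$, where $u := R_m^*f = R_m(\bar z)f \in H^1(\R^2,\C^2)$, split the pairing along the decomposition \eqref{splitting}, and on each half transfer $T$ from the domain-resolvents $R_\Omega$ and $R_{\oc}^{(m)}$ back onto $u$ via a Green-type identity for the Dirac operator. The residual boundary terms will then be rewritten using the boundary conditions satisfied by $v := R_\Omega h \in \dom_\Omega$ and $\tilde v := R_{\oc}^{(m)}\tilde h \in \dom_{\oc}$ together with the intertwining relation \eqref{eq:5}.

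Concretely, the defining resolvent equations read $h = (T-z)v$ on $\Omega$ and $\tilde h = (T+m\sigma_3-z)\tilde v$ on $\oc$. The main computational ingredient is the Dirac Green identity
\[
\sps{\varphi}{T\psi}_{\og} - \sps{T\varphi}{\psi}_{\og} = \int_{\partial\Omega}(\tr_{\og}\varphi,\, \sigma_3 B_{\og}\tr_{\og}\psi)\,\d s, \quad \varphi,\psi\in H^1(\og,\C^2),\ \og\in\{\Omega,\oc\},
\]
which follows from componentwise integration by parts together with the identity $-\ri\,\pauli\cdot\normal_{\og} = \sigma_3 B_{\og}$. Using the local equations $(T-\bar z)u|_\Omega = f|_\Omega$ and $(T+m\sigma_3-\bar z)u|_{\oc} = f|_{\oc}$, both consequences of $(H_m-\bar z)u = f$, to absorb the spectral and mass terms onto $u$, I arrive at
\[
\sps{f}{R_m g} = \sps{f}{R_\Omega \oplus R_{\oc}^{(m)} g} + \int_{\partial\Omega}(\tr_\Omega u,\, \sigma_3 B_\Omega \tr_\Omega v)\,\d s + \int_{\partial\Omega}(\tr_{\oc} u,\, \sigma_3 B_{\oc}\tr_{\oc}\tilde v)\,\d s.
\]

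What remains is to cast the two boundary integrands in the projection form stated in the lemma. Since $v\in \dom_\Omega$, the boundary condition $P_-\tr_\Omega v = 0$ gives $B_\Omega\tr_\Omega v = \tr_\Omega v$, after which \eqref{eq:5} (equivalently $\sigma_3 P_+ = P_-\sigma_3$) turns the first integrand into $(P_-\tr_\Omega u, \sigma_3\tr_\Omega v)$. The symmetric manipulation on $\oc$, exploiting $P_+\tr_{\oc}\tilde v = 0$ from \eqref{eq:4.1} together with $B_{\oc} = -B_\Omega$, converts the second integrand into $(P_+\tr_{\oc} u, \sigma_3\tr_{\oc}\tilde v)$, producing exactly the claimed identity.

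The main obstacle I anticipate is algebraic bookkeeping: tracking the opposite normal orientations on either side of $\partial\Omega$ and the $P_+\leftrightarrow P_-$ swap induced by $\sigma_3$. Analytic regularity is not an issue here, since $u\in H^1(\R^2,\C^2)$ and $v,\tilde v$ in their respective Sobolev spaces guarantee well-defined traces and justify the integration by parts (cf.\ \eqref{trace-ineq}).
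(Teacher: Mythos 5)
Your proof is correct and follows essentially the same route as the paper's: both rest on the same partial-integration lemma for $T$, the boundary conditions from Remark~\ref{B-bc}, and the intertwining relation \eqref{eq:5}. The only cosmetic difference is the order of operations—the paper first establishes the Green-type identity for arbitrary $\psi\in H^1(\R^2,\C^2)$, $\varphi\in\dom_\Omega\oplus\dom_{\oc}$ and then substitutes the resolvents, whereas you substitute from the start and unwind—but the algebra is identical.
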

\begin{proof}
Let $\psi\in H^1(\R^2,\C^2)$ and $\varphi=\phi\oplus\tilde{\phi}\in
\dom_{\Omega}\oplus\dom_{\oc} $. Since the mass term of $H_m$ is
supported on $\oc$ we have that
\begin{align*}
  \sps{H_m\psi}{\varphi}
=\sps{T\psi}{\phi}_{\Omega}+\sps{H_m\psi}{\tilde{\phi}}_{\oc}
=\sps{T\psi}{\phi}_{\Omega}+\sps{T\psi}{\tilde{\phi}}_{\oc}+\sps{\psi}{m\sigma_3\tilde{\phi}}_{\oc}\,.
\end{align*}
Using partial integration (see Lemma \ref{part-int} below) we get for any $z\in \C$
\begin{align*}
  &\sps{(H_m-\overline{z})\psi}{\varphi}=\sps{\psi}{(H_{\Omega}\oplus H_{\oc}^{(m)}-z) \varphi}-\sps{\tr_{\Omega}
  \psi}{\sigma_3 B_{\Omega} \,\tr_{\Omega} \phi}_{\partial\Omega}
-\sps{\tr_{\oc}
  \psi}{\sigma_3 B_{\oc} \,\tr_{\oc} \tilde{\phi}}_{\partial\Omega}\\
&\quad\quad\quad=\sps{\psi}{(H_{\Omega}-z)\phi}+\sps{\psi}{(H_{\oc}^{(m)}-z) \tilde\phi}-\sps{\tr_{\Omega}
  \psi}{\sigma_3 \,\tr_{\Omega} \phi}_{\partial\Omega}
-\sps{\tr_{\oc}
  \psi}{\sigma_3 \,\tr_{\oc} \tilde{\phi}}_{\partial\Omega}\,,
\end{align*}
where in the last step we used the boundary conditions for $\phi$ and
$\tilde{\phi}$ (see Remark~\ref{B-bc}). Since $\tr_{\Omega}
\phi=P_+ \tr_{\Omega} \phi$ and $\tr_{\oc} \tilde{\phi}=P_-\tr_{\oc}
\tilde{\phi}$ we may write, in view of \eqref{eq:5}, the two boundary
terms above as $ \sps{P_-\tr_{\Omega}
  \psi}{\sigma_3 \,\tr_{\Omega} \phi}_{\partial\Omega}$ and $\sps{P_+\tr_{\oc}
  \psi}{\sigma_3 \,\tr_{\oc} \tilde{\phi}}_{\partial\Omega}$, respectively.

Let $f,g,h,\tilde{h}$, and $z$ be given as in the statement of the lemma.  The
claim follows from the above computation since it holds for
$\psi:=R^*_m f=R_m(\overline{z})f$, $\phi:=R_{\Omega}(z) h$, and
$\tilde{\phi}:=R_{\oc}^{(m)} (z)\tilde{h}$.
\end{proof}
\begin{proof}[Proof of Theorem \ref{main-thm}]
Let $f,g \in L^2(\R^2,\C^2)$ with $g=h\oplus \tilde{h}\in
 L^2(\Omega,\C^2)\oplus L^2(\oc,\C^2) $. 
For fixed constants $\mu_0>0$ and $\rho\geqslant 0$ define the set 
\begin{align}\label{set}
S_{\mu_0,\rho}:=\{\xi\in\C \,:\,|{\rm Im}(\xi)|\geqslant \mu_0 
\quad
\mbox{and}\quad 
|{\rm Re}(\xi)|\leqslant \rho\}\,.
\end{align}
According to Lemma~\ref{nc} below it suffices to
show, for some $C'<\infty$ and $m'<\infty$,
\begin{align*}
 \sup_{\xi\in S_{\mu_0,\rho}} \big| \sps{f}{(R_m(\xi)-R_\Omega(\xi)\oplus 0)\,g}\big|\leqslant\frac{C'}{\sqrt{m}} \|f\|\|g\|, \qquad \mbox{for } m\geqslant m' .
\end{align*}

From the resolvent identity in Lemma 
 \ref{comparison} we get, using Cauchy-Schwarz Inequality, 
 \begin{align*}
   \big|\sps{f}{ (R_m(\xi) -R_{\Omega}(\xi)\oplus 0) g}\big|&\leqslant
\|R_{\oc}^{(m)}(\xi) \tilde{h}\| \|f\|+\|P_-\tr_{\Omega} R_{m}(\bar\xi)
f\|_{\partial\Omega}\,
\|\tr_{\Omega}  R_{\Omega}(\xi) h  \|_{\partial\Omega}\\
&\quad
+\|\tr_{\oc} R_{m}(\bar\xi)
f\|_{\partial\Omega}\, \|\tr_{\oc} R_{\oc}^{(m)}(\xi) \tilde{h} \|_{\partial\Omega},
 \end{align*}
for any $\xi\in S_{\mu_0,\rho}$. The theorem now follows from lemmas
\ref{r-lemma1}, \ref{r-lemma2}, and Corollary \ref{r-lemma3} from
Section~\ref{rt} below.
\end{proof}
%
%
%
The next is an elementary result from partial integration.
\begin{lemma}\label{part-int}
Let $\og\in\{\Omega,\oc\}$. For any $\phi,\psi\in H^1(\og,\C^2)$ we
have
\begin{align}
  \label{eq:9}
  \sps{T\phi}{\psi}_{\og}=\sps{\phi}{T\psi}_{\og} -\sps{\tr_{\og}
  \phi}{\sigma_3 B_{\og} \,\tr_{\og} \psi}_{\partial\Omega}\,.
\end{align}
\end{lemma}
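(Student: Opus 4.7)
The plan is to establish the identity first for smooth, compactly supported test functions via the classical divergence theorem, and then extend to $H^1(\og,\C^2)$ by density and continuity of the trace operator. The algebraic core is a pointwise identity that turns the quantity $(T\phi,\psi)_{\C^2}-(\phi,T\psi)_{\C^2}$ into a total divergence.

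First I would do the pointwise computation. Writing $T=\frac{1}{i}\sum_j\sigma_j\partial_j$ and using Hermiticity of the Pauli matrices, a direct calculation (no anticommutation needed at this stage) gives
\begin{equation*}
(T\phi,\psi)_{\C^2}-(\phi,T\psi)_{\C^2}
= i\sum_{j=1}^2\partial_j(\phi,\sigma_j\psi)_{\C^2}
= i\,\nabla\!\cdot\!\bigl(\phi,\pauli\psi\bigr)_{\C^2},
\end{equation*}
where $(\phi,\pauli\psi)_{\C^2}$ denotes the $\R^2$-valued field whose $j$-th component is $(\phi,\sigma_j\psi)_{\C^2}$.

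Next I would integrate this identity over $\og$ for $\phi,\psi$ in a suitable dense subclass (e.g.\ $C^\infty_c(\overline{\og},\C^2)$, i.e.\ restrictions of $C^\infty_c(\R^2,\C^2)$ functions to $\og$), and apply the divergence theorem. Because the outward unit normal to $\og$ on $\partial\Omega$ is $\normal_\og$, this produces
\begin{equation*}
\sps{T\phi}{\psi}_{\og}-\sps{\phi}{T\psi}_{\og}
= i\sps{\tr_{\og}\phi}{(\pauli\cdot\normal_{\og})\,\tr_{\og}\psi}_{\partial\Omega}.
\end{equation*}
Since $\sigma_3^2=\id$, the definition $B_{\og}=-i\sigma_3\,\pauli\cdot\normal_{\og}$ rearranges to $\pauli\cdot\normal_{\og}=i\sigma_3 B_{\og}$. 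Substituting gives the factor $i\cdot i=-1$ and recovers exactly the right-hand side of \eqref{eq:9}.

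Finally I would extend to general $\phi,\psi\in H^1(\og,\C^2)$. Under Hypothesis~\ref{hyp0}, the domain $\og$ has a $C^{2,1}$ boundary with uniform tubular neighbourhood, so $C^\infty_c(\overline{\og},\C^2)$ is dense in $H^1(\og,\C^2)$ and the trace operator $\tr_{\og}\colon H^1(\og,\C^2)\to L^2(\partial\Omega,\C^2)$ is continuous (both facts already invoked for \eqref{trace-ineq}). Both sides of \eqref{eq:9} are continuous bilinear forms on $H^1(\og,\C^2)\times H^1(\og,\C^2)$ (the left-hand side is controlled by $\|\phi\|_{H^1}\|\psi\|_{H^1}$, the right-hand side by $\|\tr_{\og}\phi\|_{\partial\Omega}\|\tr_{\og}\psi\|_{\partial\Omega}$), so the identity passes to the limit. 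The only mild subtlety, and what I would regard as the main obstacle to double-check, is that the approximation argument must be valid uniformly in the unbounded case; this is why the global tubular-coordinate condition in Hypothesis~\ref{hyp0}(3) is being used, which guarantees the standard density and trace results carry over from the bounded-boundary setting.
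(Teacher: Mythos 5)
Your proof follows essentially the same route as the paper's: an integration-by-parts/divergence-theorem computation using the Hermiticity of the Pauli matrices, together with the algebraic identity $i\pauli\cdot\normal_{\og}=-\sigma_3 B_{\og}$. The only cosmetic difference is that you first establish the pointwise divergence identity for smooth functions and then invoke density and trace continuity explicitly, whereas the paper applies the divergence theorem directly to $H^1$ functions (which of course encapsulates the same density argument); both are correct.
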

\begin{proof}
Using divergence theorem for the Sobolev functions  $\phi,\psi\in
H^1(\og,\C^2)$ we compute
\begin{align*}
  \sps{T\phi}{\psi}_{\og}&=\int_{\og} (-i \pauli\cdot\nabla\phi (\bx),
  \psi(\bx)) d\bx= \sps{\phi}{T\psi}_{\og}+\int_{\og} \nabla\cdot (\phi (\bx), i\pauli
  \psi(\bx)) d\bx\\
&=\sps{\phi}{T\psi}_{\og}+\int_{\partial\Omega}(\tr_{\og}\phi (\bx),
  i(\pauli\cdot \normal_{\og})\tr_{\og}\psi(\bx)) d\bx .
\end{align*}
The lemma follows now since $ i\pauli\cdot \normal_{\og}=-\sigma_3 B_{\og}$.
\end{proof}
\section{Resolvent estimates}
\label{rt}
Let us start by introducing some additional geometrical objects for $\og\in\{\Omega,\oc\}$.  Given ${\normal}_{\og}=(n_1,n_2)$ we
define the tangent vector ${\bf t}_{\og}:=(-n_2,n_1)$. Normal and
tangent vectors are related to the curvature $\kappa_{\og}$ through
the formula
\begin{align}
  \label{curvature}
({\bf t}_{\og}\cdot \nabla)\, {\bf t}_{\og}=-\curv_{\og} \,\normal_{\og}\,.
\end{align}
Due to Hypothesis~\ref{hyp0} on $\partial\Omega$ we have
$\|\kappa_{\og}\|_\infty\equiv \|\kappa\|_\infty<0$. The following
lemma is well known. We give its proof for completeness at the end of
this section.
\begin{lemma}\label{energysquared}
Let $\og\in\{\Omega,\oc\}$. Then, for any $\varphi\in
\mathcal{D}_{\og}$, we have
\begin{align}
  \label{eq:14}
  \|T \varphi\|_{\og}^2=\|\nabla\varphi\|^2_{\og}+\tfrac{1}{2}  
\sps{\tr_{\og}\varphi}{\curv_{\og} \, \tr_{\og} \varphi}_{\partial\Omega}\,.
\end{align}
\end{lemma}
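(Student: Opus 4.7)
My plan has two phases: (i) reduce $\|T\varphi\|_{\og}^2-\|\nabla\varphi\|_{\og}^2$ to a boundary integral via the Pauli algebra and the divergence theorem; (ii) convert that integral into the curvature form using the boundary condition. A standard density argument lets me first do the computation for $\varphi$ in the dense subset $C^\infty(\overline{\og})\cap\mathcal{D}_{\og}$ of compactly supported smooth functions satisfying the boundary condition, then pass to the limit, since \eqref{trace-ineq} together with $\|\curv_{\og}\|_\infty<\infty$ makes every term in \eqref{eq:14} continuous in the $H^1$-norm.

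For phase (i), expanding $|T\varphi|^2=\sum_{j,k=1,2}(\partial_j\varphi,\sigma_j\sigma_k\partial_k\varphi)$ and using $\sigma_j^2=\id$ together with $\sigma_1\sigma_2=\ri\sigma_3=-\sigma_2\sigma_1$, the diagonal part yields $|\nabla\varphi|^2$, while the off-diagonal part is $\ri(\partial_1\varphi,\sigma_3\partial_2\varphi)-\ri(\partial_2\varphi,\sigma_3\partial_1\varphi)$, which equals $\nabla\cdot\mathbf{J}(\varphi)$ for
\[
\mathbf{J}(\varphi):=\bigl(\ri(\varphi,\sigma_3\partial_2\varphi),\,-\ri(\varphi,\sigma_3\partial_1\varphi)\bigr)
\]
(the mixed second derivatives cancel in $\partial_1 J_1+\partial_2 J_2$). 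Since $\mathbf{J}(\varphi)\cdot\normal_{\og}=\ri(\varphi,\sigma_3\,\mathbf{t}_{\og}\cdot\nabla\varphi)$, the divergence theorem produces
\[
\|T\varphi\|_{\og}^2 = \|\nabla\varphi\|_{\og}^2 + \ri\sps{\tr_{\og}\varphi}{\sigma_3\,\partial_s\tr_{\og}\varphi}_{\partial\Omega},
\]
where $\partial_s:=\mathbf{t}_{\og}\cdot\nabla$ is tangential differentiation along $\partial\Omega$.

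Phase (ii) is the main obstacle: extracting the factor $\tfrac12$ and the curvature from this boundary term. Setting $u:=\tr_{\og}\varphi$, the boundary condition (Remark~\ref{B-bc}) is equivalent to $(\pauli\cdot\normal_{\og})u=\ri\sigma_3 u$, from which the algebraic identity $(\pauli\cdot\mathbf{t}_{\og})u=u$ follows by multiplying through by $(\pauli\cdot\mathbf{t}_{\og})$ and invoking $(\pauli\cdot\mathbf{t}_{\og})(\pauli\cdot\normal_{\og})=-\ri\sigma_3$ together with the anticommutation of $\sigma_3$ with $\pauli\cdot\mathbf{t}_{\og}$. Differentiating $(\pauli\cdot\normal_{\og})u=\ri\sigma_3 u$ in arc length, using the Frenet relation $\partial_s\normal_{\og}=\curv_{\og}\,\mathbf{t}_{\og}$ (consistent with \eqref{curvature}), leads to $\ri\sigma_3\partial_s u=\curv_{\og}\,u+(\pauli\cdot\normal_{\og})\partial_s u$. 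Pairing with $u$ and integrating along $\partial\Omega$ gives
\[
\ri\sps{u}{\sigma_3\partial_s u}_{\partial\Omega} = \sps{u}{\curv_{\og}u}_{\partial\Omega} + \sps{u}{(\pauli\cdot\normal_{\og})\partial_s u}_{\partial\Omega}.
\]
A further integration by parts along $\partial\Omega$ (using once more $(\pauli\cdot\mathbf{t}_{\og})u=u$) produces $2\,\mathrm{Re}\,\sps{u}{(\pauli\cdot\normal_{\og})\partial_s u}_{\partial\Omega}=-\sps{u}{\curv_{\og}u}_{\partial\Omega}$, and this inner product is in fact real because the left-hand side of \eqref{eq:14} is. Substituting back gives the asserted $\tfrac12\sps{u}{\curv_{\og}u}_{\partial\Omega}$, completing the proof.
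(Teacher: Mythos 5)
Your proof is correct, and phase (i) — the reduction of $\|T\varphi\|_{\og}^2-\|\nabla\varphi\|_{\og}^2$ to the boundary term $\ri\sps{\tr_{\og}\varphi}{\sigma_3\,\partial_s\tr_{\og}\varphi}_{\partial\Omega}$ via Pauli algebra and the divergence theorem — coincides with the paper's first step. Phase (ii) follows a genuinely different tactic. The paper works entirely pointwise on $\partial\Omega$: from $B_{\og}\tr_\og\varphi=\tr_\og\varphi$ (which is the same as your $(\pauli\cdot\normal_\og)u=\ri\sigma_3 u$), it applies Leibniz, the anticommutation $\{B_\og,\sigma_3\}=0$, and Hermiticity of $B_\og$ to obtain the scalar identity $(\varphi,\sigma_3\partial_s\varphi)=\tfrac12(\varphi,\sigma_3(\partial_s B_\og)\varphi)$ at each boundary point, and then computes $\ri\sigma_3\partial_s B_\og=\curv_\og B_\og$ from the Frenet relation; the conclusion then drops out without any further integration. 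You instead differentiate the boundary condition, pair with $u$, and then need a second integration by parts along $\partial\Omega$ together with the observation that $\sps{u}{(\pauli\cdot\normal_\og)\partial_s u}_{\partial\Omega}$ is real (inferred from the realness of the other terms) to evaluate the residual term. Both arguments are sound and use essentially the same algebraic ingredients; the paper's version is slightly more economical since it avoids the extra integration by parts and the realness step, while yours isolates the curvature via the Frenet relation before any integration, which some may find more transparent.
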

Let $\mu_0\in (0,\infty)$  and $\rho\in[0,\infty)$ be fixed constants.  In this section we
will evaluate the resolvents on the set $S_{\mu_0,\rho}$ defined in \eqref{set}.
\begin{lemma}\label{r-lemma1}
There exist a constant $c<\infty$ such that for all $\xi\in S_{\mu_0,\rho}$ 
and all $h\in L^2(\Omega,\C^2)$
\begin{align}
  \label{eq:13a}
  &\|R_{\Omega}(\xi) h\|_{\Omega}\leqslant \frac{1}{\mu_0} \|h\|_{\Omega}\, ,\\ \label{eq:13.1}
&\|\tr_{\Omega} R_{\Omega}(\xi) h\|_{\partial\Omega}\leqslant c\|h\|_{\Omega}\,.
\end{align}
\end{lemma}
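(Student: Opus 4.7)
The first estimate \eqref{eq:13a} is immediate from self-adjointness of $H_\Omega$: for any $\xi$ with $|\mathrm{Im}(\xi)|\geqslant \mu_0$ one has $\|R_\Omega(\xi)\|\leqslant 1/|\mathrm{Im}(\xi)|\leqslant 1/\mu_0$.

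For the trace estimate \eqref{eq:13.1}, the plan is to combine the energy identity of Lemma \ref{energysquared} with the trace inequality \eqref{trace-ineq}, and then absorb the resulting boundary term using the $L^\infty$ bound on the curvature provided by Hypothesis \ref{hyp0}. Concretely, set $\varphi:=R_\Omega(\xi)h\in \dom_\Omega$, so that $T\varphi=\xi\varphi+h$. Since $\varphi\in\dom_\Omega$, Lemma \ref{energysquared} gives
\[
\|\nabla\varphi\|_\Omega^2 \leqslant \|T\varphi\|_\Omega^2 + \tfrac{1}{2}\|\kappa\|_\infty \|\tr_\Omega\varphi\|_{\partial\Omega}^2.
\]
Then, by \eqref{trace-ineq} applied with some $\varepsilon>0$ to be chosen,
\[
\|\tr_\Omega\varphi\|_{\partial\Omega}^2 \leqslant \varepsilon\|T\varphi\|_\Omega^2 + \tfrac{\varepsilon}{2}\|\kappa\|_\infty\|\tr_\Omega\varphi\|_{\partial\Omega}^2 + C_\varepsilon \|\varphi\|_\Omega^2.
\]
Fixing $\varepsilon$ small enough (depending only on $\|\kappa\|_\infty$) so that $\varepsilon\|\kappa\|_\infty/2\leqslant 1/2$, the boundary term on the right may be absorbed into the left, giving
\[
\|\tr_\Omega\varphi\|_{\partial\Omega}^2 \leqslant 2\varepsilon \|T\varphi\|_\Omega^2 + 2C_\varepsilon\|\varphi\|_\Omega^2.
\]

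It remains to control $\|T\varphi\|_\Omega$ and $\|\varphi\|_\Omega$ uniformly in $\xi\in S_{\mu_0,\rho}$. From \eqref{eq:13a} we already have $\|\varphi\|_\Omega\leqslant \|h\|_\Omega/\mu_0$. For $\|T\varphi\|_\Omega$, using $T\varphi=\xi\varphi+h$,
\[
\|T\varphi\|_\Omega \leqslant |\xi|\,\|\varphi\|_\Omega + \|h\|_\Omega \leqslant \frac{|\xi|}{|\mathrm{Im}(\xi)|}\|h\|_\Omega + \|h\|_\Omega.
\]
On $S_{\mu_0,\rho}$ one has $|\xi|^2\leqslant \rho^2 + |\mathrm{Im}(\xi)|^2$ and $|\mathrm{Im}(\xi)|\geqslant\mu_0$, so $|\xi|/|\mathrm{Im}(\xi)|\leqslant \sqrt{1+\rho^2/\mu_0^2}$, which is bounded uniformly on $S_{\mu_0,\rho}$. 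The estimate \eqref{eq:13.1} then follows with a constant $c$ depending only on $\mu_0$, $\rho$, $\|\kappa\|_\infty$, and the trace constants.

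The only step that requires care is the choice of $\varepsilon$ in the trace inequality so as to dominate the curvature boundary term of Lemma \ref{energysquared}; this is exactly why the $\varepsilon$-version of the trace inequality noted in Remark \ref{rem:trace-inequality} is needed rather than the standard one.
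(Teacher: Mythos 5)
Your proof is correct and rests on the same key ingredients as the paper's: Lemma~\ref{energysquared} combined with the $\varepsilon$-version of the trace inequality \eqref{trace-ineq}, choosing $\varepsilon$ small to absorb the curvature term, and the uniform resolvent bound on $S_{\mu_0,\rho}$. The only cosmetic difference is that you substitute $\varphi=R_\Omega(\xi)h$ at the outset and use $T\varphi=\xi\varphi+h$ directly, whereas the paper first derives a coercivity lower bound $2\|(H_\Omega-\xi)\varphi\|^2\geqslant\|\tr_\Omega\varphi\|_{\partial\Omega}^2-\tilde C_\varepsilon\|\varphi\|_\Omega^2$ valid for all $\varphi\in\dom_\Omega$ (mirroring the structure it reuses for $H_m$ and $H_{\oc}^{(m)}$) and substitutes only at the end.
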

\begin{proof}
  The first inequality above is straightforward. In order to show
  \eqref{eq:13.1} observe first that,  for all
$\varphi\in\mathcal{D}_{\Omega}$ and $\xi=\eta+i\mu\in S_{\mu_0,\rho}$,
  \begin{align}\label{xi-bound}
    \|(H_\Omega-\xi)\varphi\|^2=
    \|(T-\eta)\varphi\|_{\Omega}^2+\mu^2
    \|\varphi\|_{\Omega}^2\geqslant \tfrac{1}{2}
    \|T\varphi\|_{\Omega}^2+(\mu_0^2-2\rho^2) 
    \|\varphi\|_{\Omega}^2\,.
  \end{align}
Recall that by the Trace Inequality \eqref{trace-ineq}
  for every $\varepsilon>0$ there is a $C_\varepsilon <\infty$ such
  that
  \begin{align}\label{ti}
    \|\nabla \varphi\|^2\geqslant \tfrac{1}{\varepsilon}
    \|\tr_{\Omega}\varphi\|^2_{\partial\Omega}
-\tfrac{C_\varepsilon}{\varepsilon} \|\varphi\|_{\Omega}^2\,,\quad
    \varphi \in H^1(\Omega).
  \end{align}
Thus, using the previous two inequalities and estimating equation \eqref{eq:14} we get, for all
$\varphi\in\mathcal{D}_{\Omega}$ and $\xi\in S_{\mu_0,\rho}$,
\begin{align}\nonumber
  2\|(H_\Omega-\xi)\varphi\|^2&\geqslant
  \big(\tfrac{1}{\varepsilon}-\tfrac{\|\curv\|_\infty}{2}\big)
  \|\tr_{\Omega}\varphi\|^2_{\partial\Omega}+\big(
  2\mu_0^2-4\rho^2-\tfrac{C_\varepsilon}{\varepsilon}\big)
 \|\varphi\|_{\Omega}^2\\\label{e2}
&\geqslant   \|\tr_{\Omega}\varphi\|^2_{\partial\Omega}-\tilde{C}_\varepsilon \|\varphi\|_{\Omega}^2\,,
\end{align}
where in the last step we choose
$\varepsilon^{-1}=1+\|\curv\|_\infty/2$ and set
$\tilde{C}_\varepsilon:=|2\mu_0^2-4\rho^2-C_\varepsilon/\varepsilon|$.  Let 
$h\in L^2(\Omega,\C^2)$. Replacing $\varphi$ by $R_\Omega(\xi)h\in \mathcal{D}_{\Omega}$
in \eqref{e2} we get that
\begin{align*}
  \|\tr_{\Omega}R_\Omega(\xi)h\|^2_{\partial\Omega}\leqslant
  2\|h\|_{\Omega}^2+
\tilde{C}_\varepsilon \|R_\Omega(\xi)h\|^2_{\Omega}\leqslant
  (2+\tilde{C}_\varepsilon/\mu_{0}^2) 
 \|h\|^2_{\Omega}\,.
\end{align*}
This finishes the proof of the lemma.
\end{proof}
%
%
%
\begin{lemma}\label{r-lemma2}
For all $\xi\in
S_{\mu_0,\rho}$,  $h\in L^2(\oc,\C^2)$, and all
$m>2\sqrt{|\mu_0^2-2\rho^2|}$ the following inequalities hold
\begin{align}
  \label{eq:13.2}
  &\|R_{\oc}^{(m)}(\xi) h\|_{\oc}\leqslant \tfrac{2}{m} \|h\|_{\oc}\, ,\\\label{eq:13.3}
&\|\tr_{\oc} R_{\oc}^{(m)}(\xi) h\|^2_{\partial\Omega}\leqslant 
{\tfrac{2}{m}}\|h\|_{\oc}^2\,.
\end{align}
\end{lemma}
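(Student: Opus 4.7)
The plan is to reduce the two estimates to a coercive lower bound on $\|(H_{\oc}^{(m)} - \xi)\varphi\|^2$ for $\varphi \in \dom_{\oc}$. The target bound will be of the form $\tfrac{m^2}{4}\|\varphi\|^2_{\oc} + \tfrac{m}{2}\|\tr_{\oc}\varphi\|^2_{\partial\Omega}$; once this is established, setting $\varphi := R_{\oc}^{(m)}(\xi) h$ and reading off the two summands separately yields \eqref{eq:13.2} and \eqref{eq:13.3} simultaneously. The template is the argument used in the proof of Lemma \ref{r-lemma1}, but the mass term $m\sigma_3$ will supply the additional coercivity that has to be extracted.

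The first main step will be an analogue of Lemma \ref{energysquared} for $H_{\oc}^{(m)}$, namely
\[
  \|H_{\oc}^{(m)}\varphi\|^2 = \|T\varphi\|^2_{\oc} + m^2\|\varphi\|^2_{\oc} + m\|\tr_{\oc}\varphi\|^2_{\partial\Omega}, \qquad \varphi \in \dom_{\oc}.
\]
To derive this, I would expand $\|(T+m\sigma_3)\varphi\|^2$ and handle the cross term $2m\,\mathrm{Re}\sps{T\varphi}{\sigma_3\varphi}_{\oc}$ via Lemma \ref{part-int}. Because $T$ and $\sigma_3$ anticommute pointwise (by \eqref{eq:6}), the bulk contribution cancels and only a boundary integral involving $\sigma_3 B_{\oc}\sigma_3$ survives. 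A short algebraic computation gives $\sigma_3 B_{\oc}\sigma_3 = -B_{\oc}$, and the boundary condition $P_+\tr_{\oc}\varphi = 0$ (equivalently $B_{\oc}\tr_{\oc}\varphi = \tr_{\oc}\varphi$ by Remark \ref{B-bc}, since $B_\Omega = -B_{\oc}$) turns this into the claimed $+m\|\tr_{\oc}\varphi\|^2_{\partial\Omega}$ with the correct positive sign.

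The second step is to bound $\|(H_{\oc}^{(m)} - \xi)\varphi\|^2$ from below. By self-adjointness of $H_{\oc}^{(m)}$ and Young's inequality, the exact same manipulation that produced \eqref{xi-bound} in Lemma \ref{r-lemma1} will give
\[
  \|(H_{\oc}^{(m)} - \xi)\varphi\|^2 \geq \tfrac{1}{2}\|H_{\oc}^{(m)}\varphi\|^2 + (\mu_0^2 - 2\rho^2)\|\varphi\|^2_{\oc}
\]
for every $\xi \in S_{\mu_0,\rho}$. Inserting the Step 1 identity and using the hypothesis $m > 2\sqrt{|\mu_0^2 - 2\rho^2|}$ to absorb the (possibly negative) zero-th order constant into $\tfrac{m^2}{2}\|\varphi\|^2$ yields the desired lower bound.

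The main obstacle will be the sign bookkeeping in Step 1. The positive coefficient $+m$ in front of $\|\tr_{\oc}\varphi\|^2_{\partial\Omega}$ is essential: with the opposite boundary condition one would pick up $-m\|\tr_{\oc}\varphi\|^2_{\partial\Omega}$, and no resolvent estimate of order $1/m$ would follow. Chasing signs through the anticommutator $T\sigma_3 = -\sigma_3 T$, the identity $\sigma_3 B_{\oc}\sigma_3 = -B_{\oc}$, and the translation of the boundary condition between $\Omega$ and $\oc$ via $B_\Omega = -B_{\oc}$ is where one has to be careful; the remaining steps are routine.
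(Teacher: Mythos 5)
Your proposal is correct and mirrors the paper's proof: expand $\|H_{\oc}^{(m)}\varphi\|^2$, compute the cross term as a boundary integral that becomes $+m\|\tr_{\oc}\varphi\|^2_{\partial\Omega}$ via the boundary condition $B_{\oc}\tr_{\oc}\varphi=\tr_{\oc}\varphi$, combine with the $\eqref{xi-bound}$-type lower bound, absorb the constant using $m>2\sqrt{|\mu_0^2-2\rho^2|}$, and insert $\varphi=R_{\oc}^{(m)}(\xi)h$. The only cosmetic difference is that you route the cross-term computation through Lemma~\ref{part-int} and the anticommutation $T\sigma_3=-\sigma_3T$, whereas the paper invokes Green's identity directly; both give $2\,\mathrm{Re}\sps{T\varphi}{\sigma_3\varphi}_{\oc}=\sps{\tr_{\oc}\varphi}{B_{\oc}\tr_{\oc}\varphi}_{\partial\Omega}$.
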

\begin{proof}
First observe that, for $\varphi\in \mathcal{D}_{\oc}$,
\begin{equation}
\label{tb}
\begin{split}
  \|H_{\oc}^{(m)}\varphi\|^2_{\oc}&=
\|T\varphi\|^2_{\oc}+m^2\|\varphi\|^2_{\oc}+2\,{\rm Re}
\sps{T\varphi}{m\sigma_3\varphi}_{\oc}\\
&=\|T\varphi\|^2_{\oc}+m^2\|\varphi\|^2_{\oc}+
m\|\tr_{\oc}\varphi\|^2_{\partial\Omega}\,.
\end{split}
\end{equation}
where in the last step we used the boundary conditions (see Remark
\ref{B-bc}) and  that Green's identity gives, for $\varphi\in H^1(\oc)$, 
\begin{align*}
2{\rm Re}
\sps{T\varphi}{\sigma_3\varphi}_{\oc}=i \big(\sps{\pauli\cdot\nabla
  \varphi}{\sigma_3\varphi}_{\oc}+\sps{\varphi}{\pauli\cdot\nabla
  \,\sigma_3\varphi}_{\oc} \big)=\sps{\tr_{\oc} \varphi}{ B_{\oc}\,
  \tr_{\oc}\varphi }_{\partial\Omega}\,.
\end{align*}
Reasoning as in \eqref{xi-bound} and using \eqref{tb} we obtain, for
any $\varphi\in \mathcal{D}_{\oc}$ and $\xi\in S_{\mu_0,\rho}$,
\begin{align*}
  \|(H_{\oc}^{(m)}-\xi)\varphi\|^2_{\oc}&\geqslant 
\tfrac{1}{2}
    \|H_{\oc}^{(m)}\varphi\|_{\oc}^2+(\mu_0^2-2\rho^2) 
    \|\varphi\|_{\oc}^2\\
&= \tfrac{1}{2}\|T\varphi\|^2_{\oc}
  +\tfrac{m}{2}\|\tr_{\oc}\varphi\|^2_{\partial\Omega}+(
\tfrac{m^2}{2}+\mu_0^2-2\rho^2) 
    \|\varphi\|_{\oc}^2\\
&\geqslant
  \tfrac{m}{2}\|\tr_{\oc}\varphi\|^2_{\partial\Omega}+\tfrac{m^2}{4}   \|\varphi\|_{\oc}^2\,,
\end{align*}
where in the last step we used the hypothesis on $m$. We find the
bounds \eqref{eq:13.2} and \eqref{eq:13.3} after inserting
$\varphi=R_{\oc}^{(m)}(\xi) h$ in the last inequality.
\end{proof}
%
%
%
The next statement follows from 
\cite[Lemma~4]{Stockmeyerwugalter2016}.
\begin{proposition}[A priori lower bound \cite{Stockmeyerwugalter2016}]\label{sw16}
There exist constants $c<\infty$ and $m_0'>1$ such that for all
$\varphi\in H^1(\R^2,\C^2)$
\begin{align*}
  \|H_m\varphi\|^2\geqslant \|\nabla \varphi\|_{\Omega}^2+m \|P_-
  \tr_{\Omega}\varphi\|^2_{\partial\Omega}-c
  \,\|\tr_{\Omega}\varphi\|^2_{\partial\Omega},\quad m>m_0'\,.
\end{align*}
\end{proposition}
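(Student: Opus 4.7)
My plan is to expand $\|H_m\varphi\|^2$ directly using the splitting \eqref{splitting} into $\Omega$- and $\oc$-contributions, apply integration-by-parts identities to isolate the $m\|P_-\tr\|^2$ term, and absorb the residual ``bad'' boundary term via a trace inequality adapted to the tubular neighborhood from Hypothesis~\ref{hyp0}. Writing $\varphi=\phi\oplus\tilde\phi$ with $\phi\in H^1(\Omega)$, $\tilde\phi\in H^1(\oc)$ and trace-matching $\tr_\Omega\phi=\tr_{\oc}\tilde\phi$ (valid because $\varphi\in H^1(\R^2)$), the mass-term structure gives
\[
 \|H_m\varphi\|^2 \,=\, \|T\phi\|^2_\Omega + \|(T+m\sigma_3)\tilde\phi\|^2_{\oc}.
\]
The $\oc$-contribution expands, exactly as in the proof of Lemma~\ref{r-lemma2}, to
\[
 \|(T+m\sigma_3)\tilde\phi\|^2_{\oc} \,=\, \|T\tilde\phi\|^2_{\oc} + m^2\|\tilde\phi\|^2_{\oc} + m\sps{\tr_{\oc}\tilde\phi}{B_{\oc}\tr_{\oc}\tilde\phi}_{\partial\Omega}.
\]
Using $B_{\oc}=-B_\Omega=P_--P_+$ together with trace matching, the boundary term becomes $m\|P_-\tr_\Omega\varphi\|^2_{\partial\Omega}-m\|P_+\tr_\Omega\varphi\|^2_{\partial\Omega}$.

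Next I would invoke the global identity $\|T\varphi\|^2_{\R^2}=\|\nabla\varphi\|^2_{\R^2}$, which holds for every $\varphi\in H^1(\R^2,\C^2)$ by the Pauli anticommutation relations (verify it on $C_0^\infty(\R^2,\C^2)$ where all cross terms cancel after one integration by parts, then extend by density). Since both norms split trivially across $\partial\Omega$, this yields
\[
 \|T\phi\|^2_\Omega+\|T\tilde\phi\|^2_{\oc}\,=\,\|\nabla\phi\|^2_\Omega+\|\nabla\tilde\phi\|^2_{\oc},
\]
the individual boundary corrections on the two sides cancelling due to the opposite orientations of $\partial\Omega$ as the boundary of $\Omega$ versus $\oc$. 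Assembling the pieces,
\[
 \|H_m\varphi\|^2 \,=\, \|\nabla\phi\|^2_\Omega+\|\nabla\tilde\phi\|^2_{\oc}+m^2\|\tilde\phi\|^2_{\oc}+m\|P_-\tr_\Omega\varphi\|^2_{\partial\Omega}-m\|P_+\tr_\Omega\varphi\|^2_{\partial\Omega}.
\]
Comparing with the target inequality reduces the problem to bounding $m\|P_+\tr_\Omega\varphi\|^2_{\partial\Omega}$ from above by $\|\nabla\tilde\phi\|^2_{\oc}+m^2\|\tilde\phi\|^2_{\oc}+c\|\tr_\Omega\varphi\|^2_{\partial\Omega}$ for some constant $c$ and all $m\geq m_0'$.

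For this absorption I would use $\|P_+\tr\|^2\leq\|\tr\|^2$ together with a trace inequality on $\oc$ of the form $\|\tr_{\oc}\tilde\phi\|^2_{\partial\Omega}\leq\varepsilon\|\nabla\tilde\phi\|^2_{\oc}+C_\varepsilon\|\tilde\phi\|^2_{\oc}$, derived via the fundamental theorem of calculus in the normal variable on the tubular strip $\Psi((-\delta,0)\times\partial\Omega)\subset\oc$ supplied by Hypothesis~\ref{hyp0}(iii). The Jacobian bound $C_0>|\det J_\Psi|>C_0^{-1}$ makes the constants in this inequality approach their flat-boundary values as $\delta$ shrinks. The natural scaling $\varepsilon\sim 1/m$ (equivalently a strip of width $\sim 1/m$) leads to $m\|P_+\tr\|^2\leq\|\nabla\tilde\phi\|^2_{\oc}+mC_{1/m}\|\tilde\phi\|^2_{\oc}$, which must fit inside the available $\|\nabla\tilde\phi\|^2_{\oc}+m^2\|\tilde\phi\|^2_{\oc}+c\|\tr\|^2$. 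The main obstacle is precisely that $mC_{1/m}$ is itself of order $m^2$, so the margin is thin; closing it requires tracking the geometric constants $\|\curv\|_\infty$, $\delta_0$, $C_0$ carefully and choosing $m_0'$ large enough (depending on them). This sharp bookkeeping is the content of \cite[Lemma~4]{Stockmeyerwugalter2016}, which the authors invoke directly.
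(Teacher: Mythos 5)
Your proposal is correct and follows essentially the same route as the paper: both expand $\|H_m\varphi\|^2$ via integration by parts and the global identity $\|T\varphi\|^2_{\R^2}=\|\nabla\varphi\|^2_{\R^2}$ to reach the identity $\|H_m\varphi\|^2=\|\nabla\varphi\|^2+m^2\|\varphi\|^2_{\oc}+m(\|P_-\tr\varphi\|^2_{\partial\Omega}-\|P_+\tr\varphi\|^2_{\partial\Omega})$, reduce to the estimate $\|\nabla\varphi\|^2_{\oc}+m^2\|\varphi\|^2_{\oc}\geq(m-c)\|\tr\varphi\|^2_{\partial\Omega}$, and then defer the sharp tubular-coordinate bookkeeping to \cite[Lemma~4]{Stockmeyerwugalter2016}. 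The only difference is technical: the paper inserts an IMS partition of unity to localize to $Q_{\delta_0}$ before passing to tubular coordinates, whereas you argue directly on the strip, but this is a routine variation. One remark on your intuition: the ``thin margin'' you flag (namely $mC_{1/m}\sim m^2$) is not really an obstacle — on a flat half-space the one-dimensional Cauchy--Schwarz estimate $|\psi(0)|^2\leq 2\|\psi\|\|\psi'\|\leq \frac1m\|\psi'\|^2+m\|\psi\|^2$ already gives $m\|\tr\|^2\leq\|\nabla\|^2+m^2\|\psi\|^2$ with no loss at all; the $-c\|\tr\|^2$ correction in the proposition comes solely from the curvature and Jacobian terms in the tubular coordinates, which is precisely what \cite[Lemma~4]{Stockmeyerwugalter2016} tracks.
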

\begin{remark}
This result is stated in \cite[Lemma~4]{Stockmeyerwugalter2016} for
a bounded and regular domain $\Omega$. However, its proof extends
easily to unbounded
domains. It also extends to the case in which
different components of the boundary are at a certain minimal distance
from each other and when we can choose global tubular coordinates with
a globally bounded curvature.  Hypothesis~\ref{hyp0} on the boundary
guarantees just that. 

For completenes let us quickly review the argument: In view of
Lemma~\ref{part-int} one can check that, for any
$\varphi\in H^1(\R^2,\C^2)$,
\begin{align*}
   \|H_m\varphi\|^2&=\|T\varphi\|^2+m^2\|\varphi\|^2_{\oc}+2m{\rm Re}
                     \sps{T\varphi}{\sigma_3\varphi}_{\oc}\\
&=\|\nabla \varphi\|^2+m^2\|\varphi\|^2_{\oc}+m(\|P_- \tr\,\varphi
  \|^2_{\partial\Omega}-\|P_+ \tr\,\varphi \|^2_{\partial\Omega})\,.
\end{align*}
Thus, it suffices to show that, for $m$ large enough, 
\begin{align}
  \label{eq:222}
  \|\nabla \varphi\|^2_{\oc}+m^2\|\varphi\|^2_{\oc}\ge (m-c) \| \tr\,\varphi \|^2_{\partial\Omega}\,.
\end{align}
Recall the definition of $\delta_0$ given in
Hypothesis~\ref{hyp0}. The next step in the proof given in 
\cite{Stockmeyerwugalter2016} uses a smooth partition of the unity $u,v$ on the set 
$Q_{\delta_0}=\{x\in \oc\,:\, {\rm dist}(x,\partial\Omega)<\delta_0\}$
such that $u^2+v^2=1$ on $\oc$ and $u$ is supported in  $Q_{\delta_0}$
with $u(x)=1$ when $ {\rm dist}(x,\partial\Omega)<\delta_0/2$. Using
the IMS localisation formula one gets, for some $c>0$, 
\begin{align*}
   \|\nabla \varphi\|^2_{\oc}+m^2\|\varphi\|^2_{\oc}\ge
 \|\nabla (u\varphi)\|^2+(m^2-c^2)\|u\varphi\|^2_{\oc}\,.
\end{align*}
The latter expression, locates the problem only on $Q_{\delta_0}$. One
can then obtain \eqref{eq:222} by expressing the integrals in tubular
coordinates  and using \cite[Lemma~4]{Stockmeyerwugalter2016}.
\end{remark}
\begin{corollary}\label{r-lemma3}
There exist a constant $c'<\infty$ such that for all $\xi\in
S_{\mu_0,\rho}$, $f\in L^2(\R^2,\C^2)$,  and all $m>m_0'$
\begin{align}
  \label{eq:13.4}
  &\|P_- \tr_{\Omega} R_{m}(\xi) f\|_{\partial\Omega} \leqslant \frac{c'}{\sqrt{m}} \|f\|\\\label{eq:13.5}
&\|\tr_{\Omega} R_{m}(\xi) f\|_{\partial\Omega}\leqslant c'\,\|f\|\,.
\end{align}
\end{corollary}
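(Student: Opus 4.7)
The plan is to apply Proposition~\ref{sw16} to $\varphi := R_m(\xi) f$ and to absorb the unwanted boundary term $c\|\tr_\Omega \varphi\|^2_{\partial\Omega}$ using the trace inequality \eqref{trace-ineq}. The key preliminary is an a priori $L^2$-bound on $H_m \varphi$ that is uniform for $\xi \in S_{\mu_0, \rho}$.

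First I would fix $\xi = \eta + i\mu \in S_{\mu_0, \rho}$ and set $\varphi = R_m(\xi) f$. Since $H_m$ is self-adjoint, the cross term in $\|(H_m - \xi)\varphi\|^2$ vanishes, giving $\|f\|^2 = \|(H_m - \eta)\varphi\|^2 + \mu^2 \|\varphi\|^2$. Combined with the standard resolvent estimate $\|\varphi\| \leq \|f\|/\mu_0$ and the elementary inequality $\|H_m\varphi\|^2 \leq 2\|(H_m - \eta)\varphi\|^2 + 2\eta^2\|\varphi\|^2$, this yields a constant $C_1 = C_1(\mu_0, \rho)$ such that $\|H_m\varphi\|^2 \leq C_1 \|f\|^2$, uniformly on $S_{\mu_0,\rho}$.

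Next I would apply the trace inequality \eqref{trace-ineq} with $\varepsilon = 1/(2c)$, where $c$ is the constant appearing in Proposition~\ref{sw16}, to get $c\|\tr_\Omega \varphi\|^2_{\partial\Omega} \leq \tfrac{1}{2}\|\nabla\varphi\|^2_\Omega + \alpha \|\varphi\|^2_\Omega$ for some $\alpha < \infty$. Substituting this into the lower bound of Proposition~\ref{sw16} gives, for $m > m_0'$,
\[
\|H_m\varphi\|^2 \;\geq\; \tfrac{1}{2}\|\nabla\varphi\|^2_\Omega + m\,\|P_- \tr_\Omega \varphi\|^2_{\partial\Omega} - \alpha \|\varphi\|^2 .
\]
Combining this with the a priori bound $\|H_m\varphi\|^2 \leq C_1\|f\|^2$ together with $\|\varphi\| \leq \|f\|/\mu_0$ immediately produces $m \|P_- \tr_\Omega \varphi\|^2_{\partial\Omega} \leq C_2 \|f\|^2$, which is \eqref{eq:13.4}.

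For \eqref{eq:13.5} the same chain of inequalities also provides an a priori bound $\|\nabla\varphi\|^2_\Omega \leq C_3 \|f\|^2$, and a final application of the trace inequality \eqref{trace-ineq} (say with $\varepsilon = 1$) yields $\|\tr_\Omega \varphi\|^2_{\partial\Omega} \leq C'\|f\|^2$. The main obstacle is really bookkeeping of constants; the conceptual work sits inside Proposition~\ref{sw16}, whose lower bound provides precisely the factor of $m$ that converts $L^2$-control of $H_m \varphi$ into the $1/\sqrt{m}$ rate on the $P_-$-trace.
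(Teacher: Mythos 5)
Your proof is correct and follows essentially the same route as the paper: bound $\|H_m R_m(\xi)f\|^2$ uniformly over $S_{\mu_0,\rho}$ via the standard resolvent estimates, then combine Proposition~\ref{sw16} with the trace inequality \eqref{trace-ineq} and substitute $\varphi=R_m(\xi)f$. The only bookkeeping difference is that the paper chooses $\varepsilon^{-1}\geqslant 1+c$ so that the absorption leaves an explicit $\|\tr_{\Omega}\varphi\|^2_{\partial\Omega}$ term on the right, yielding \eqref{eq:13.4} and \eqref{eq:13.5} from a single inequality, whereas you absorb the full boundary term into $\tfrac12\|\nabla\varphi\|^2_\Omega$ and then re-apply \eqref{trace-ineq} once more to recover \eqref{eq:13.5}.
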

\begin{proof}
  We follow the same strategy as in the proof of
  Lemma~\ref{r-lemma1}. By Proposition~\ref{sw16} and \eqref{ti} we
  have, for $\varepsilon^{-1}\geqslant 1+c $,
  \begin{align*}
    \|H_m\varphi\|^2\geqslant
    \|\tr_{\Omega}\varphi\|^2_{\partial\Omega}+
m \|P_- \tr_{\Omega}\varphi\|^2_{\partial\Omega}
    -\tfrac{C_\varepsilon}{\varepsilon}\|\varphi\|^2_{\Omega} \,.
  \end{align*}
Therefore, $2\|  (H_m-\xi)\varphi\|^2\geqslant
\|\tr_{\Omega}\varphi\|^2_{\partial\Omega}
+
m \|P_- \tr_{\Omega}\varphi\|^2_{\partial\Omega}-\tilde{C}_\varepsilon
\|\varphi\|^2$, where
$\tilde{C}_\varepsilon:=|2\mu_0^2-4\rho^2-C_\varepsilon/\varepsilon|$.
Thus, with the substitution $\varphi=R_m(\xi) f$, we get that
\begin{align*}
  \|\tr_{\Omega} R_m(\xi) f\|^2_{\partial\Omega}+m \|P_-
  \tr_{\Omega}R_m(\xi) f\|^2_{\partial\Omega}
\leqslant (2+\tilde{C}_\varepsilon/\mu_0^2) \|f\|^2,
\end{align*}
from which follows the claim.
\end{proof}
%
%
%
\begin{proof}[Proof of Lemma~\ref{energysquared}]
  Let $\varphi\in C^1(\overline{\og},\C^2)\cap\mathcal{D}(\og)$ be a continuous
  differentiable spinor. Then, a direct computation yields
\begin{align*}
   \|T
  \varphi\|_{\og}^2=\|\nabla\varphi\|^2_{\og}+i\int_{\partial\Omega}
  (\varphi(\bx), \sigma_3 ({\bf t}_{\og}\cdot \nabla)\,\varphi(\bx) ) \,d\omega_\bx\,.
\end{align*}
Assume further that $\varphi\in \mathcal{D}_{\og}$ and write
$\partial_{\bf t} := ({\bf t}\cdot \nabla)$ (we skip the index reference to
the set $\og$). Then, according to
Remark \ref{B-bc}, we find that on $\partial \Omega$ 
\begin{align*}
(\varphi,\sigma_3 \partial_{\bf t} \varphi)=  (\varphi,\sigma_3
  (\partial_{\bf t} B) \varphi)+ (\varphi,\sigma_3
   B\, \partial_{\bf t}  \varphi)=(\varphi,\sigma_3
  (\partial_{\bf t} B) \varphi)- (\varphi,\sigma_3
   \partial_{\bf t}  \varphi),
\end{align*}
where in the last inequality we use that $\{B,\sigma_3\}=0$. This
implies that $(\varphi,\sigma_3
   \partial_{\bf t}  \varphi)=(\varphi,\sigma_3
  (\partial_{\bf t} B) \varphi)/2$. Moreover, using \eqref{curvature},
  we get on $\partial \Omega$ that
  \begin{align*}
    i\sigma_3 \partial_{\bf t} B =\partial_{\bf t} \pauli\cdot\normal=
    \curv B\,.
  \end{align*}
Thus, for $\varphi \in C^1(\overline{\og},\C^2)\cap
\mathcal{D}_{\og}$,
\begin{align*}
    \|T
  \varphi\|_{\og}^2=\|\nabla\varphi\|^2_{\og}+\frac{1}{2}\int_{\partial\Omega}
  (\varphi(\bx),\curv \,\varphi(\bx) ) \,d\omega_\bx\,.
\end{align*}
Using a density argument we can extend the above expression to
$\mathcal{D}_{\og}$ obtaining \eqref{eq:14}.
\end{proof}

\section{Proof of Proposition~\ref{prop:sa} and Theorem~\ref{main-thm-2}}\label{S4}

\begin{proof}[Proof of Proposition~\ref{prop:sa}]
We first prove self-adjointness of $\hov$. Lemma~\ref{energysquared} and trace inequalities \eqref{trace-ineq} give, for $\epsilon\in(0,1)$ and all $\varphi\in\dom_\Omega$, with $C_\epsilon$ given by \eqref{trace-ineq}
\begin{equation}\label{lem-sa3}
\begin{split}
 \|T \varphi\|^2_\Omega & = \|\nabla\varphi\|^2_\Omega  
 + \frac12 \sps{\tr_{\Omega}\varphi}{\curv_{\Omega} \, 
 \tr_{\Omega} \varphi}_{\partial\Omega} \\
 & \geqslant \left(1-\epsilon\frac{\| \kappa_\Omega \|_{\infty} }{2}\right)  
 \|\nabla \varphi\|^2 - C_\epsilon \frac{\| \kappa_\Omega \|_{\infty} }{2} 
 \|\varphi\|^2_\Omega \, .
\end{split}
\end{equation}
Moreover, from Hypothesis~\ref{hyp1}(i), we have, for $\varphi\in\dom_\Omega\subset H^1(\Omega, \C^2)$, 
\begin{equation}\label{lem-sa4}
\begin{split}
 \|V\varphi\|^2_\Omega \leqslant a \|\nabla\varphi\|^2_\Omega + b \|\varphi\|^2_\Omega \, .
\end{split}
\end{equation} 
Picking $\epsilon$ small enough, \eqref{lem-sa3}-\eqref{lem-sa4} imply that there exists $a'\in(0,1)$ and $b'$ such that for all $\varphi\in\dom_\Omega$  
\begin{equation}\label{eq:rel-bound-2}
 \|V \varphi\|^2_\Omega \leqslant a' \|T\varphi\|^2_\Omega + b' \|\varphi\|^2_\Omega \, ,
\end{equation}
which proves that $\hov$ is self-adjoint on $\dom_\Omega$. 

The proof of self-adjointness of $\hmv$ is a direct consequence of the next lemma.
\end{proof}
\begin{lemma}\label{lem:V1}
Assume Hypothesis~\ref{hyp1} is fulfilled. Then, for $m$ large enough, $V$ is relatively bounded with respect to $H_m$, with relative bound less than one. Namely there exists $\tilde{a}\in (0,1)$ and $\tilde{b}\geqslant 0$ such that for $m>m_0$ and $\varphi\in H^1(\R^2, \C^2)$,
\begin{equation}\label{hyp2}
 \| V \varphi\|^2 \leqslant \tilde{a} \| H_m\varphi \|^2 + \tilde{b} \| \varphi \|^2 .
\end{equation}
\end{lemma}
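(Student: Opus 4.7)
The plan is to split the estimate across the decomposition $\R^2=\Omega\cup\oc$, handle $\oc$ trivially via Hypothesis~\ref{hyp1}(ii), and reduce the $\Omega$ side to controlling $\|\nabla\varphi\|_\Omega^2$ by $\|H_m\varphi\|^2$ and $\|\varphi\|^2$ with a sub-unital coefficient. On $\oc$ we immediately have $\|V\varphi\|_{\oc}^2\leqslant \|V\|_{L^\infty(\oc)}^2\|\varphi\|^2$, which only contributes to $\tilde b$. On $\Omega$, Hypothesis~\ref{hyp1}(i) yields $\|V\varphi\|_\Omega^2\leqslant a\|\nabla\varphi\|_\Omega^2+b\|\varphi\|_\Omega^2$ with $a<1$, so the real problem is to upgrade $\|\nabla\varphi\|_\Omega^2$ to $\|H_m\varphi\|^2$ while preserving a coefficient strictly less than $1/a$.

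The key input will be Proposition~\ref{sw16}. For $m$ larger than the constant $c$ appearing there, rewriting the inequality as an upper bound on $\|\nabla\varphi\|_\Omega^2$, splitting $\|\tr_\Omega\varphi\|_{\partial\Omega}^2=\|P_+\tr_\Omega\varphi\|_{\partial\Omega}^2+\|P_-\tr_\Omega\varphi\|_{\partial\Omega}^2$, and discarding the non-positive $(c-m)\|P_-\tr_\Omega\varphi\|_{\partial\Omega}^2$ contribution yields
\begin{align*}
\|\nabla\varphi\|_\Omega^2\leqslant \|H_m\varphi\|^2+c\|\tr_\Omega\varphi\|^2_{\partial\Omega}.
\end{align*}
I would then apply the trace inequality \eqref{trace-ineq} on $\Omega$ with a small parameter $\varepsilon>0$, namely $\|\tr_\Omega\varphi\|^2_{\partial\Omega}\leqslant \varepsilon\|\nabla\varphi\|_\Omega^2+C_\varepsilon\|\varphi\|_\Omega^2$, and absorb the gradient term on the left to obtain $(1-c\varepsilon)\|\nabla\varphi\|_\Omega^2\leqslant\|H_m\varphi\|^2+cC_\varepsilon\|\varphi\|^2$. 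Since $a<1$, I can fix $\varepsilon$ small enough that $a/(1-c\varepsilon)<1$; inserting back into Hypothesis~\ref{hyp1}(i) and adding the $\oc$ contribution gives \eqref{hyp2} with $\tilde a=a/(1-c\varepsilon)$, an explicit constant $\tilde b$, and $m_0=\max(m_0',c)$, where $m_0'$ is the threshold from Proposition~\ref{sw16}.

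The only real pitfall is the upgrade from $\|\nabla\varphi\|_\Omega^2$ to $\|H_m\varphi\|^2$: if one instead tried to control the boundary term via the $\oc$ side, using the trace inequality on $\oc$ together with the identity $\|H_m\varphi\|^2=\|\nabla\varphi\|^2+m^2\|\varphi\|_{\oc}^2+m(\|P_-\tr\varphi\|^2-\|P_+\tr\varphi\|^2)$, one would be forced into $\varepsilon\sim 1/m$ and therefore $C_\varepsilon\sim m$, which would spoil the $m$-uniformity of $\tilde a$ and $\tilde b$. The role of Proposition~\ref{sw16} is precisely to do the $\oc$-side bookkeeping once and for all, so that what remains is a $\partial\Omega$-boundary term that can be absorbed using only the $\Omega$-side trace inequality.
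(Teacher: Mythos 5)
Your proof is correct and follows essentially the same route as the paper: handle $\oc$ via Hypothesis~\ref{hyp1}(ii), use Proposition~\ref{sw16} to dominate $\|\nabla\varphi\|_\Omega^2$ by $\|H_m\varphi\|^2$ modulo a boundary term, absorb that boundary term with the trace inequality~\eqref{trace-ineq}, and feed the result into Hypothesis~\ref{hyp1}(i). The only cosmetic difference is that you split $\|\tr_\Omega\varphi\|^2_{\partial\Omega}$ into its $P_\pm$ parts before discarding the $m\|P_-\tr_\Omega\varphi\|^2_{\partial\Omega}$ term, whereas the paper simply drops that term outright (it is nonnegative, so no splitting or threshold $m>c$ is needed at that step).
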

\begin{proof}
  From Proposition~\ref{sw16}, there exists $c>0$ and $m_0>0$ such
  that for $m>m_0$ and $\varphi\in H^1(\R^3,\C^2)$, $\|H_m\varphi\|^2
  \geqslant \|\nabla \varphi\|^2_\Omega + m\|P_- \tr_\Omega
  \varphi\|_{\partial\Omega}^2 - c
  \|\tr_\Omega\varphi\|^2_{\partial\Omega}$. Hence using trace
  inequalities \eqref{trace-ineq}, we obtain that for all $\epsilon\in
  (0,1)$, there exists $c_\epsilon>0$ such that for all $\varphi\in
  H^1(\R^3,\C^2)$
\begin{equation}\nonumber
 \| H_m \varphi\|^2 \geqslant (1-\epsilon)\|\nabla\varphi\|^2_\Omega - c_\epsilon\|\varphi\|_\Omega^2 .
\end{equation}
This implies, from Hypothesis~\ref{hyp1}(i), 
\begin{equation}\label{lem:sa-1}
 \|V\varphi\|_\Omega^2 \leqslant \frac{a}{1-\epsilon} \|H_m\varphi\|^2 + \left(\frac{a c_\epsilon}{1-\epsilon} + b\right)\|\varphi\|_\Omega^2 .
\end{equation}
Now using the Hypothesis~\ref{hyp1}(ii), implies
\begin{equation}\label{lem:sa-2}
 \|V\varphi\|^2 \leqslant \|V \varphi\|_{\Omega}^2 + \|V\|^2_{L^\infty(\oc)} \|\varphi\|^2.
\end{equation}
In view of estimates \eqref{lem:sa-1} and \eqref{lem:sa-2}, we conclude by picking $\epsilon$ 
small enough so that $\tilde{a}:=a/(1-\epsilon) <1$.
\end{proof}

We are now equipped to prove Theorem~\ref{main-thm-2}. 
\begin{proof}[Proof of Theorem~\ref{main-thm-2}]
As in the proof of Theorem~\ref{main-thm},
for fixed constants $\mu_0>0$ and $\rho\geqslant 0$, we consider the set $S_{\mu_0,\rho}$ defined by \eqref{set}.
Pick $\xi\in S_{\mu_0,\rho}$ and denote $\rmv=\rmv(\xi)$, $\rov=\rov(\xi)$, $R_\Omega = R_\Omega(\xi)$ and $R_m = R_m(\xi)$. 

From Hypothesis~\ref{hyp1}(i) and \eqref{eq:rel-bound-2}, we have, in view of \cite[Satz~9]{Weidmann1976}, for $\mu_0$ large enough
\begin{equation}\label{eq:lem4.2-1}
\| R_\Omega  V\|<1 \, .
\end{equation}
Similarly, with \eqref{hyp2}, we have for $\mu_0$ large enough
\begin{equation}
\| V R_m \| < 1 \, .
\end{equation}
Hence, using resolvent formulae for $\rmv$ and $\rov\oplus 0$, we obtain 
\begin{equation}\label{eq:lem4.2-2}
 \rmv - \rov\oplus 0 = (1- R_\Omega V \oplus 0)^{-1} (R_m - R_\Omega\oplus 0)(1+ V \rmv)\, .
\end{equation}
In view of Theorem~\ref{main-thm}, using  \eqref{eq:lem4.2-1}-\eqref{eq:lem4.2-2} and the fact that
\begin{equation*}
V\rmv  = V R_m(1 + V R_m)^{-1}\, ,
\end{equation*}
implies that there exists $\mu_0>0$,  $m_0>1$ and $C>0$ such that for all $m>m_0$,   
$$
 \sup_{\xi\in S_{\mu_0,\rho}} \|\rmv(\xi) - \rov(\xi)\oplus 0\| \leqslant \frac{C}{\sqrt{m}}\, .
$$
The result then follows from Lemma~\ref{nc} that remains true replacing  $H_m$ 
and $H_\Omega$ respectively by $\hmv$ and $\hov$ and replacing in the statement Theorem~\ref{main-thm} by Theorem~\ref{main-thm-2}.
\end{proof}


 \appendix
 \section{Sufficient condition}
For the next lemma consider the setting of Theorem \ref{main-thm}. It
gives a sufficient condition to control the supremum norm in \eqref{eq:7}.
\begin{lemma}\label{nc} 
  Let $K\subset \varrho(H_{\Omega})$ be a compact set. For constants
  $\mu_0\in(\max_{z\in K}|{\rm Im}(z)|,\infty)$ and
  $\rho \in (0,\max_{z\in K}|{\rm Re}(z)|)$ recall the definition of
  the set $S_{\mu_0,\rho}$ from \eqref{set}. Assume that for some $C'<\infty$ and
  $m_0'>1$ we have that
  \begin{align}
    \label{eq:11}
    \sup_{z\in S_{\mu_0,\rho}}\|R_m(z)-R_\Omega(z)\oplus 0\|\leqslant
    \frac{C'}{\sqrt{m}}, 
  \end{align}
for all $m>m_0'$. Then, the statement of Theorem \ref{main-thm} holds
for some $m_0>m_0'$.
\end{lemma}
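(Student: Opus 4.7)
The plan is to transfer the uniform resolvent estimate from $S_{\mu_0,\rho}$ to the compact set $K$ by a perturbative first-resolvent argument. Since $\mu_0 > \max_{z\in K}|\mathrm{Im}(z)|$ forces $K \cap S_{\mu_0,\rho} = \emptyset$, the hypothesis \eqref{eq:11} cannot be invoked directly at points of $K$; the main obstacle is therefore to first establish that $K \subset \varrho(H_m)$ for all sufficiently large $m$, and only afterwards quantify the resolvent difference on $K$.

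I would fix a reference point $\xi_0\in S_{\mu_0,\rho}$, e.g.\ $\xi_0 := i\mu_0$, and study, for each $z\in K$, the bounded operators
\[ B_\Omega(z) := I - (z-\xi_0)\bigl(R_\Omega(\xi_0)\oplus 0\bigr), \qquad B_m(z) := I - (z-\xi_0) R_m(\xi_0) \]
on $L^2(\R^2,\C^2)$. A direct use of the first resolvent identity for $R_\Omega$ gives $B_\Omega(z)^{-1} = \bigl(I + (z-\xi_0) R_\Omega(z)\bigr) \oplus I$ and the factorisation $R_\Omega(z)\oplus 0 = \bigl(R_\Omega(\xi_0)\oplus 0\bigr) B_\Omega(z)^{-1}$; compactness of $K\subset \varrho(H_\Omega)$ then yields some $M_1<\infty$ with $\|B_\Omega(z)^{-1}\|\leqslant M_1$ uniformly on $K$. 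The hypothesis \eqref{eq:11} together with compactness of $K$ furnishes $\|B_m(z)-B_\Omega(z)\| \leqslant C''/\sqrt{m}$ uniformly on $K$, and a Neumann-series argument applied to $B_m(z) = B_\Omega(z)\bigl[I + B_\Omega(z)^{-1}(B_m(z)-B_\Omega(z))\bigr]$ shows that, for $m$ large enough, $B_m(z)$ is invertible with $\|B_m(z)^{-1}\|\leqslant 2M_1$ uniformly in $z\in K$. A brief algebraic check using $R_m(\xi_0)(H_m-z) = B_m(z)$ on $\dom(H_m)$ identifies $R_m(\xi_0) B_m(z)^{-1}$ as the resolvent $(H_m-z)^{-1}$, establishing $K\subset \varrho(H_m)$.

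The conclusion then follows from the identity
\[ R_m(z) - R_\Omega(z)\oplus 0 = \Delta(\xi_0)\, B_m(z)^{-1} + (z-\xi_0)\bigl(R_\Omega(\xi_0)\oplus 0\bigr) B_m(z)^{-1} \Delta(\xi_0)\, B_\Omega(z)^{-1}, \]
where $\Delta(\xi_0) := R_m(\xi_0) - R_\Omega(\xi_0)\oplus 0$, obtained by inserting the factorisations $R_m(z)=R_m(\xi_0)B_m(z)^{-1}$ and $R_\Omega(z)\oplus 0 = (R_\Omega(\xi_0)\oplus 0)B_\Omega(z)^{-1}$ and using $B_m(z)^{-1}-B_\Omega(z)^{-1} = (z-\xi_0)B_m(z)^{-1}\Delta(\xi_0) B_\Omega(z)^{-1}$. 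Combining the uniform bounds on $\|B_m(z)^{-1}\|$, $\|B_\Omega(z)^{-1}\|$, $\|R_\Omega(\xi_0)\|\leqslant 1/\mu_0$, and $|z-\xi_0|$ on $K$ with $\|\Delta(\xi_0)\|\leqslant C'/\sqrt{m}$ from \eqref{eq:11} yields $\|R_m(z) - R_\Omega(z)\oplus 0\|\leqslant C/\sqrt{m}$ uniformly on $K$, which is precisely the content of Theorem \ref{main-thm}.
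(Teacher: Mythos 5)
Your argument is correct, and it follows a genuinely different route from the paper's proof. The paper works with a \emph{moving} reference point $\xi_z:=\mathrm{Re}(z)\pm i\mu_0$, splitting $K$ into $K^\pm$ according to the sign of $\mathrm{Im}(z)$, and runs a contraction argument: because $\xi_z$ lies directly above (or below) $z$ one has $\|(\xi_z-z)R_\Omega(\xi_z)\|=(\mu_0-|\mathrm{Im}(z)|)/\mathrm{dist}(\xi_z,\mathrm{spec}(H_\Omega))<1$, uniformly on each $K^\pm$ by compactness, and then \eqref{eq:11} transfers this strict contraction to $(\xi_z-z)R_m(\xi_z)$ for $m$ large, yielding both $K\subset\varrho(H_m)$ and the resolvent estimate through a Neumann-type identity. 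You instead fix a \emph{single} reference point $\xi_0=i\mu_0$, observe that $B_\Omega(z)=I-(z-\xi_0)(R_\Omega(\xi_0)\oplus 0)$ is automatically boundedly invertible for every $z\in K\subset\varrho(H_\Omega)$ (with $B_\Omega(z)^{-1}=(I+(z-\xi_0)R_\Omega(z))\oplus I$), obtain a uniform bound on $\|B_\Omega(z)^{-1}\|$ by compactness and norm-continuity of $z\mapsto R_\Omega(z)$, and then perturb $B_m(z)$ off $B_\Omega(z)$ via Neumann series. The identity $(H_m-z)R_m(\xi_0)=B_m(z)=R_m(\xi_0)(H_m-z)$ on $\dom(H_m)$ then gives $z\in\varrho(H_m)$ with $R_m(z)=R_m(\xi_0)B_m(z)^{-1}$, and your displayed second-resolvent-type identity with $\Delta(\xi_0)$ delivers the $O(m^{-1/2})$ bound. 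Your approach is arguably leaner: it needs no case split into $K^\pm$, no geometric estimate of $\mathrm{dist}(\xi_z,\mathrm{spec}(H_\Omega))$, and in fact only invokes the hypothesis \eqref{eq:11} at the single point $\xi_0$ rather than on the whole strip $S_{\mu_0,\rho}$. The trade-off is that the paper's contraction condition is explicit and quantitative (the constant $1-\delta$ is tied directly to $\mu_0$ and the spectrum), whereas your uniform bound $M_1$ on $\|B_\Omega(z)^{-1}\|$ is obtained nonconstructively from compactness; both are perfectly adequate for the qualitative statement being proved.
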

\begin{proof}
  Assuming \eqref{eq:11} we will show Theorem \ref{main-thm}, separately,
  in the two compact regions
$K^+:=\{z\in K\,:\,{\rm Im} (z)\geqslant 0\}$ and 
$K^-:=\{z\in K\,:\,{\rm Im} (z)\leqslant 0\}.$
  We define the connection between the sets
  $K^+$  and $S_{\mu_0,\rho}$ through the map 
$$K^+\ni z\mapsto\xi_z:={\rm Re} (z)+i\mu_0\in S_{\mu_0,\rho}\,.$$

We first show that $K^+\subset \varrho(H_m)$ for sufficiently large
$m>1$. For $\varphi\in H^1(\R^2,\C^2)$ and $z\in K^+$ we have
\begin{align}\label{a1}
  (H_m-z)\varphi&=(H_m-\xi_{z})\varphi+(\xi_{z}-z) R_m(\xi_{z}) (H_m-\xi_{z})\varphi
\end{align}
(above and along this proof we simply write $R_{\Omega}$ for
$R_{\Omega}\oplus 0$).

It is easy to verify that  the function
\begin{align}
  \label{eq:10}
  \|(\xi_{z}-z) R_{\Omega}(\xi_{z})\|=\frac{\mu_0 -{{\rm Im}(z)}
  }{{\rm dist} (\xi_{z}, {\rm spec}(H_{\Omega}))}\,, \quad z\in K^+,
\end{align}
is continuous and strictly smaller than 1 (here
${\rm dist}(\cdot,\cdot)$ denotes the distance from a point to a
set). Thus, since $K^+$ is compact, there is a $\delta\in(0,1)$ such that
\begin{align}
  \label{eq:12}
   \sup_{z\in K^+}\|(z-\xi_{z}) R_{\Omega}(\xi_{z})\|=1-\delta.
\end{align}
Writing $(\xi_{z}-z) R_m(\xi_{z}) =(\xi_{z}-z) R_{\Omega}(\xi_{z}) +(\xi_{z}-z) (R_m (\xi_{z})-R_{\Omega} (\xi_{z}))$
we see that, using   \eqref{eq:12} and \eqref{eq:11}, we may pick $m$ so large that 
\begin{align}
  \label{eq:13}
  \|(\xi_{z}-z) R_m(\xi_{z})\| \leqslant
  1 - \delta + \frac{2\mu_0C'}{\sqrt{m}}\leqslant 1-\delta/2,
\end{align}
uniformly for $z\in K^+$. This, together with \eqref{a1} , implies that 
\begin{align}\label{b1}
  \| (H_m-z)\varphi\|\geqslant \delta/2  \|
  (H_m-\xi_z)\varphi\|\geqslant \mu_0 \delta/2\|\varphi\|.
\end{align}
Therefore, $z\in \varrho(H_m)$ for $m$ sufficiently large. One can
show analogously that $K^{-}\subset \varrho(H_m)$ for $m$ large
enough, but this time using the connection $K^{-} \ni z\mapsto {\rm Re}
(z)-i\mu_0\in S_{\mu_0,\rho}\, $.

Next, we show that the limit in \eqref{eq:7} holds for $K$ replaced by
$K^+$. Let $m$ be so large that \eqref{eq:13} holds. Then, we have
that $K^+\subset \varrho(H_m)$ and, for any $z\in K^+$, we may use the
first resolvent identity to get
\begin{align*}
  R_{\Omega}(z)-R_m(z)&=R_{\Omega}(\xi_z) (1+(\xi_z-z)
  R_{\Omega}(\xi_z))^{-1} -R_{m}(\xi_z) (1+(\xi_z-z)
  R_{m}(\xi_z))^{-1}\\
&=:R_\Omega (\xi_z) M_{\Omega}(z)-R_{m}(\xi_z) M_{{m}}(z).
\end{align*}
Notice that $ M_m(z)-M_{\Omega}(z)=(\xi_z-z) M_m(z)
(R_{\Omega}(\xi_z)-R_{m}(\xi_z))   M_{\Omega}(z)$. 
Thus, in view of \eqref{eq:12} and  \eqref{eq:13}, we get that
\begin{align*}
  \|R_{\Omega}(z)-R_m(z)\|&\leqslant \|R_m(\xi_z)-R_{\Omega}(\xi_z)\|
  \|M_m(z)\| +\|R_{\Omega}(\xi_z)\| \|M_m(z)-M_{\Omega}(z)\|\\
&\leqslant \frac{2C'}{\sqrt{m} \delta}+\frac{4C'}{\sqrt{m} \delta^2 }\,.
\end{align*}
This proves  the claim for $K^+$. For $K^-$ one proceeds analogously 
with the aforementioned connection. This finishes the proof of the lemma.
\end{proof}
\bigskip

\noindent
{\cb {\bf Acknowledgments.}
It is a pleasure to thank the REB program of CIRM for giving us the
opportunity to generate this research. 
E.S has been partially funded by Fondecyt (Chile)
project \# 114--1008 and project \# 118--0355. L. L.T. was partially supported by ANR DYRAQ ANR-17-CE40-0016-01.}

\end{document}